\DeclarePairedDelimiter\ceil{\lceil}{\rceil}
\pgfplotsset{compat=1.15}
\begin{document}
\title{Filling MIS Vertices of a Graph by Myopic Luminous Robots
}

\titlerunning{Filling MIS Vertices of a Graph by Myopic Luminous Robots}
\author{}\institute{}

\author{Subhajit Pramanick\inst{1} \and Sai Vamshi Samala\inst{1} \and  Debasish Pattanayak\inst{2} \and Partha Sarathi Mandal\inst{1}\thanks{psm@iitg.ac.in}}
\institute{Department of Mathematics \\ Indian Institute of Technology Guwahati, India
%\email{\{subhajit.pramanick, sai170123042,  psm\}@iitg.ac.in}
\and
LUISS University, Rome,  Italy} %\\\email{dpattanayak@luiss.it}}

\authorrunning{Pramanick et al.}

\maketitle
\begin{abstract}

    We present the problem of finding a maximal independent set (MIS) (named as \emph{MIS Filling problem}) of an arbitrary connected graph with luminous myopic mobile robots. The robots enter the graph one after another from a particular vertex called the \emph{Door} and move along the edges of the graph without collision to occupy vertices such that the set of occupied vertices form a maximal independent set. 

    In this paper, we explore two versions of the \textit{MIS filling problem}. For the \emph{MIS Filling with Single Door} case,  our IND algorithm forms an MIS of size $m$ in $O(m^2)$ epochs under an asynchronous scheduler, where an epoch is the smallest time interval in which each participating robot gets activated and executes the algorithm at least once.
    The robots have three hops of visibility range, $\Delta + 8$ number of colors, and $O(\log \Delta)$ bits of persistent storage, where $\Delta$ is the maximum degree of the graph. For the \emph{MIS Filling with Multiple Doors} case, our MULTIND algorithm forms an MIS in $O(m^2)$ epochs under a semi-synchronous scheduler using robots with five hops of visibility range, $\Delta + k + 7$ number of colors, and $O(\log (\Delta + k))$ bits of persistent storage, where $k$ is the number of doors.
\end{abstract}

\keywords{Distributed algorithms, Multi-agent systems, Mobile robots, MIS, Filling Problem, Luminous Robots} 

\section{Introduction}\label{sec:appendix}
\subsection{Motivation}
The coordination among large number of autonomous mobile robots or agents has gained significant interest in recent years. Under the framework of ``Look-Compute-Move" cycles, the robots can perform various tasks such as exploration \cite{albers2000exploring}, gathering \cite{d2016gathering,defago2020self,kamei2019gathering,peleg2005distributed}, pattern formation \cite{bose2020arbitrary,SuzukiY99}, dispersion \cite{augustine2018dispersion,kshemkalyani2019efficient,sharma2016complete}, scattering \cite{PoudelS19,hideg2018filling,PoudelS20} and others.
In this work, we consider the underlying environment as a graph, and the robots can stay at the nodes and move along the edges.

In general, maximal independent set (MIS) of a network graph play a significant role in decomposing the network into clusters of low diameter, which is often very useful in designing and implementing distributed divide and conquer algorithms. MIS vertices can also be used as a network backbone for deploying communication infrastructure. For example, information dissemination in a low latency system where all robots form a network should be located at MIS vertices so that all other vertices are just one hop away.

%\suvo{Given a graph, a maximal independent set (MIS) serves as an important subset of the vertex set. The MIS of a network graph has many vital functions, such as forming a backbone of the network for deploying communication infrastructure or surveillance of the unknown environment. For example, information dissemination in a low latency system where all robots form a network should be located at MIS vertices so that all other vertices are just one hop away.}

The Filling problem, introduced by Hsiang et al. ~\cite{hsiang2004algorithms}, considers the robots enter via particular vertices and fill an environment (graph) composed of pixels (vertices) and robots occupy every pixel (vertex).
Later Hideg et al.~\cite{hideg2020asynchronous} presented the Filling problem for an arbitrary connected graph.
It is of interest to cover the entire graph but using a smaller number of robots. Thus forming an MIS by the robots that enter the graph becomes a natural extension.
We call this problem the \emph{MIS Filling problem}.

In this paper, we consider \textit{luminous robots}, that are mobile robots possessing externally visible persistent memory (or \textit{lights}).
Each vertex can contain at most one robot at a time.
We say a collision happens when two or more robots move to the same vertex.
Only one robot can travel along one edge at a time.
In this problem, the robots enter the graph one by one through a specific vertex called the \textit{Door} and move in the graph along the edges from one vertex to another while avoiding a collision.
The objective is only to occupy vertices that form an MIS.
We solve two flavors of the problem: graphs with a single Door under an asynchronous (ASYNC) scheduler and graphs with multiple Doors under a semi-synchronous (SSYNC) scheduler.
We use \textit{epochs} to denote the time complexity, where an epoch is the smallest amount of time required for all the participating robots to activate once.
On each activation a robot executes a \textit{Look-Compute-Move} \textit{(LCM)} cycle.
In ASYNC, the cycles are independently executed within finite but unpredictable time.
In SSYNC, time is discretely separated into rounds, and a subset of the robots are activated in each round and finish the execution of a cycle in the same round.
Having multiple Doors instead of just one offers redundancy in situations where a Door can be blocked.

\subsection{Related works}

Kamei and Tixeuil \cite{kamei2020asynchronous}  solve two variations of the maximum independent set (MAX\_IS) placement problem for grid networks. The first one assumes knowledge of port-numbering for each node. It uses three colors of light and a visibility range of two. 
The other one removes the assumption of port-numbering and uses seven colors of light and a visibility range of three. Barrameda et al.
\cite{10.1007/978-3-540-92862-1_11} proposed algorithms for uniform dispersal or filling problem on any simply connected orthogonal space using identical asynchronous sensors. They present two algorithms; one for the single door, where sensors have one unit of visibility range and two-bit of persistent memory, and the other for multiple doors, where sensors have two units of visibility and a constant amount of persistent memory. They also prove that oblivious sensors cannot solve the problem deterministically even if they have unlimited visibility. For multiple doors, they show that the problem is unsolvable if the visibility range is less than two, even if sensors have unbounded memory. Further, even with unbounded visibility and memory, they show that the problem is unsolvable if the sensors are identical. Barrameda et al.  \cite{10.1007/978-3-642-45346-5_17} extended the problem of uniform dispersal for orthogonal domains with holes. They solve the problem when robots have a visibility range of six without any direct communication among themselves. Later, they solve the problem using direct communication among robots to reduce the visibility radius without increasing the memory requirement. Hideg and Lukovszki  \cite{hideg2017uniform} solve the filling problem in orthogonal regions, where the robots enter the region through entry points, called doors. They propose two algorithms with run-time $O(n)$, one for single door and the other for multiple door case. Later Hideg and Lukovszki \cite{hideg2020asynchronous} presented the Filling problem for an arbitrary connected graphs in asynchronous setting where the goal is to fill the entire graph using myopic luminous robots. 

The algorithm proposed by Hideg and Lukovszki \cite{hideg2020asynchronous} cannot be directly applied in \emph{MIS Filling problem}, as the communication process in their PACK algorithm designed around one hop movement of the robots. For starters, one needs to maintain a two hop gap between the chain of robots, while simultaneously ensuring that the chain never crosses itself. The chain crossing problem does not arise if the chain is closely packed at one hop distance, and it becomes challenging in the presence of multiple chains. Also, at no point we can allow more than the size of MIS of robots to enter the graph, since that would render the problem unsolvable. We show that our algorithms handle these additional requirements and correctly form an MIS.
The state-of-the-art results and ours' are given in Table~\ref{comp_tab1_app}, where $m$ is the number of robots that form MIS.

\vspace{-1.13cm}
\begin{table}[ht]

\caption{The state of the art of previous results PACK \cite{hideg2020asynchronous}, BLOCK \cite{hideg2020asynchronous} and MIS placement on grid \cite{kamei2020asynchronous} with our proposed algorithms IND and MULTIND. }
    \label{comp_tab1_app}
    %\resizebox{\textwidth}{!}{%
    \begin{tabular}{|p{6em}|p{5em}|p{4.5em}|p{5em}|p{5em}|p{4.6em}|p{4.7em}|}
        \hline 
        \textbf{Algorithm}         & PACK \cite{hideg2020asynchronous} & BLOCK \cite{hideg2020asynchronous} & Algo 1  \cite{kamei2020asynchronous} & Algo 2 \cite{kamei2020asynchronous} & IND                   & MULTIND                     \\
        \hline
        \textbf{Scheduler}         & ASYNC                             & ASYNC                              & ASYNC                                & ASYNC                               & ASYNC                 & SSYNC                       \\
        \hline
        \textbf{Problem}           & Filling Problem                   & Filling Problem                    & MAX\_IS Placement                    & MAX\_IS Placement                   & \emph{MIS Filling}    & \emph{MIS Filling}          \\
        \hline
        \textbf{Topology}          & Connected graph                   & Connected graph                    & Grid network                         & Grid network                        & Connected graph       & Connected graph             \\
        \hline
        \textbf{Number of Doors}   & Single                            & Multiple ($k$)                     & Single                               & Single                              & Single                & Multiple ($k$)              \\
        \hline
        \textbf{Visibility Range}  & 1 hop                             & 2 hops                             & 2 hops                               & 3 hops                              & 3 hops                & 5 hops                      \\
        \hline
        \textbf{Memory (in bits)} & $O(\log \Delta)$             & $O(\log \Delta)$              & $O(1)$                         & $O(1)$                         & $O(\log \Delta)$ & $O(\log (\Delta + k))$  \\
        \hline
        \textbf{Number of Colors}  & $\Delta + 4$                      & $\Delta + k + 4$                   & 3                                    & 7                                   & $\Delta + 8$          & $\Delta + k + 7$            \\
        \hline
        \textbf{Time Complexity}   & $O$($n^{2}$) epochs               & $O$($n$) epochs                    & $O(n(L+l))$ steps\footnotemark       & $O(n(L+l))$ steps                   & $O$($m^{2}$) epochs   & $O$($m^{2}$) epochs         \\
        \hline
    \end{tabular}
    \vspace{\baselineskip}
    
\end{table}\vspace{-1.14cm}

\footnotetext{The steps represent the total movement of robots throughout the execution of the algorithm. The number of nodes and grid dimensions are represented by $n$, $L$ and $l$, respectively.}

\subsection{Contributions}
In this paper, we propose two algorithms IND and MULTIND corresponding to single and multiple doors.
\begin{itemize}
    \item  
    Algorithm IND solves the \emph{MIS Filling problem} in graphs with a single Door under an ASYNC scheduler using robots with a visibility range of 3, $\Delta + 8$ number of colors, $O(\log \Delta)$ bits of persistent storage in $O(m^{2})$ epochs, where $\Delta$ is the maximum degree of the graph, and $m$ is the number of robots that form an MIS.
    \item Algorithm MULTIND solves the \emph{MIS Filling problem} in graphs with multiple Doors under an SSYNC scheduler using robots with visibility range of 5, $\Delta + k+7$ number of colors, $O(\log (\Delta + k))$ bits of persistent storage in $O(m^{2})$ epochs, where $m$, $\Delta$, and $k$ are the number of robots, maximum degree of the graph and number of Doors, respectively.
\end{itemize}

\subsection{Organization}
Section~\ref{sec:model_app} describes the model used in this paper.
In Section~\ref{sec:singledoor_app}, we consider the MIS filling problem with a single Door, i.e., $k = 1$ for robots under ASYNC scheduler and in Section~\ref{sec:multidoor_app} with multiple Doors, i.e., $k > 1$ for robots under SSYNC scheduler before concluding in Section~\ref{sec:conclusion_app}.

\section{Model}\label{sec:model_app}
In this paper, we model the environment as a graph. We say that the graph contains a set of vertices that are connected to Doors from where robots can enter. The number of Doors in the graph is not known to the robots, but the robots are equipped with colors to distinguish themselves if they have entered the graph from different Doors. We assume that there are a maximum of $k$ Doors attached to the graph.

\subsubsection{Graph:}
We consider an anonymous graph, i.e., the nodes of the graph are indistinguishable from each other. Each vertex $v$ of the graph contains port numbers corresponding to the incident edges from $[1,2, \ldots, \delta(v)]$, where $\delta(v)$ is the degree of the vertex $v$. 
Given an anonymous connected port labeled graph $H = (V', E')$, we construct a graph $G = (V,E)$ with $k$ Doors, that adds two auxiliary vertices $\{d_i, d_i'\}$ corresponding to each Door that is connected to distinct vertices $\{v_1, v_2, \ldots, v_k\} \subset V'$.
We have a path $d_i \rightarrow d_i' \rightarrow v_i$ corresponding to each Door $d_i$.
The robots enter the graph through the Door, and a new robot appears at the Door immediately after it becomes empty.
We say a vertex is \emph{free} if none of the vertices adjacent to it are occupied by any robot.
Since we add a buffer vertex corresponding to each Door, all the vertices in the original graph $H$ are \textit{free} vertices in the beginning.

\subsubsection{Robots:}
The focus is on completing the task using robots with minimal capabilities operating under certain adversarial conditions.
The robots are \emph{autonomous} (no central or external control), \emph{myopic} (they have limited visibility range), \emph{anonymous} (without distinguishable features or identification) and \emph{homogeneous} (they all have the same capabilities/execute the same program).
Additionally the robots are \emph{luminous}, i.e.,
they have a light attached to them which can display various \emph{colors} that represent the value of a state variable. This works as a mode of communication between the robots.

\subsubsection{Time Cycle:}
Each robot operates in the \emph{Look-Compute-Move} (LCM) model, in which the actions of the robots are divided into three phases.
\begin{itemize}
    \item \textit{Look:} The robot takes a snapshot of its surroundings, i.e.,
          the vertices with in the visibility range and the colors of the robots occupying them.
    \item \textit{Compute:} The robot runs the algorithm using the snapshot as the input and determines a target vertex or chooses to remain in place.
    \item \textit{Move:} The robot moves to the target vertex if needed. A robot moves two hops in a single move phase. 
\end{itemize}

\subsubsection{Assumptions:}
We have the following assumptions regarding the knowledge of a robot and the properties of the underlying graph.
\begin{itemize}
    \item The robots have no knowledge of the graph but an upper bound of $\Delta$, the maximum degree of the graph.
    \item For a robot placed at $v$ with a visibility range of $z$, the port numbers of all the vertices in its visibility range are visible.
    \item A robot knows all the colors, but they can only display one color corresponding to the Door via which it enters the graph. Note that, this color corresponding to the Door only used by one robot at a time, but all robots that come from the same Door can display it when the need arises.(We use unique colors for each Door to determine a hierarchy among them.)
    \item The movement of the robot is non-instantaneous.
\end{itemize}

Note that, we use directions and port-numbers interchangeably throughout the paper. Each port number corresponds to a DIR color, also the direction towards Successor or Predecessor.

\subsubsection{Problem:}
We define the MIS filling problem formally as follows:

\begin{problem}{(MIS filling problem)}
    Given an anonymous connected port labeled graph $G = (V, E)$ with $k$ Doors, the objective is to relocate robots that appear at Doors such that at termination, the robots occupy a set of vertices $V_1$ ($V_1 \subset V$) that forms a maximal independent set of $G$.
\end{problem}

\section{Algorithm for MIS Filling with Single Door}\label{sec:singledoor_app}
We now describe the algorithm called IND, which is inspired by the PACK algorithm \cite{hideg2020asynchronous} and uses the concept of Virtual Chain Method  \cite{hideg2018filling}. The robots move throughout the graph is similar to the depth-first search (DFS). We assume that the robots operate under an asynchronous (ASYNC) scheduler.
An epoch is the shortest time in which each robot not in the Finished state is activated at least once and performs an LCM cycle.
Each robot requires a visibility range of 3 hops, $O(\log \Delta)$ bits of persistent memory, and $\Delta+8$ colors.
\subsection{Preliminaries}
\subsubsection{Colors:} 
The colors used by the robots are described here.
\begin{itemize}
    \item ON - Used initially when a robot arrives at the door.
    \item DIR - $\Delta$ colors corresponding to a port number in $[1, \Delta]$.
    
    \item CONF - Used to confirm that first DIR color has been seen and received.
    \item CONFC - Used to confirm that CONF color has been seen and received.
    \item CONF2 - Used to confirm that second DIR color has been seen and received.
    \item CONF3 - Used to confirm that the Packed state is achieved.
    \item WAIT - Used by a Leader while waiting for the Packed State.
    \item MOV - Used when a robot is in movement.
    \item OFF - Used by a robot in the Finished state.
    
\end{itemize}
Note that, a DIR color pointing towards a successor is used as a special color to indicate change of leadership.

\begin{definition}($k$ hops Neighborhood of a vertex $v$) 
For a vertex $v$, we define $k$-hops Neighborhood of $v$ to be the set of all vertices that are $k$ hops away from $v$ and denote it  by $N_v^k$.
\end{definition}
\begin{definition}
($k$ hops Visibility Set of a robot $r$)
For a robot $r$ placed at a vertex of the graph, we define the $k$ hops Visibility Set of $r$ to be the set of all vertices which are within $k$ hops from the current location of $r$. We denote it by $V_r^k$.
\end{definition}

\subsection{IND Algorithm}

In this section, we present the rules for the robots that they follow to successfully form a chain with robots occupying alternative vertices. We say a vertex is \textit{free} if none of its neighbors contain a robot. The first robot that enters the graph is called the Leader robot.
We define \textit{Packed state} as the state of a chain where all the alternative vertices are occupied by robots. We define it formally as follows:

\begin{definition}
    {(Packed state)}
    \label{packedstate_def_app}
    Let $L$ be a positive odd integer and $P = \{v_1, v_2, \ldots,$ $ v_L\}$ be a path starting from the Door at $v_1$ and the leader at $v_L$
    such that every second vertex of $P$, i.e., $v_1, v_3, \ldots, v_{L-2}$ was visited by the Leader. A chain of robots is in a Packed state if the vertices $v_1, v_3, \ldots, v_{L-2}$ are occupied by follower robots.
\end{definition}
\begin{flushleft}
    
    \begin{table}[H]
    \caption{List of Variables used in algorithms IND and MULTIND.}
        \label{tab2_app}
        \centering
        % \resizebox{\textwidth}{!}{%
        \begin{tabular}{| p{0.2\textwidth} | p{9.5cm} |}
            \hline
            \textbf{Variable} & \textbf{Description}                                                                     \\
            \hline
            State             & State of the robot - None/Follower/Leader/Finished                                       \\
            \hline
            Target            & Directions to the target vertex                                                          \\
            \hline
            NextTarget        & Directions to the vertex to which the robot has to move after reaching the Target vertex (Used by Followers) \\
            \hline
            Successor         & Successor robot                                                                          \\
            \hline
            Predecessor       & Predecessor robot                                                                        \\
            \hline
            Color             & Color displayed by robot's light                                                         \\
            \hline
            Entry             & Previous location of the robot/location of the follower stored in terms of directions    \\
            \hline
        \end{tabular}
        \vspace{\baselineskip}

    \end{table}
\end{flushleft} \vspace{-1.8cm}

 Only the Leader is allowed to move in the Packed state.
WAIT color is used by the Leader as soon as it reaches its target vertex to indicate that it is waiting for the chain to reach the Packed state. This is so that the Leader can choose a target such that neither the target nor any of the vertices adjacent to it are going to be occupied by any other robot. Secondly, the Leader (Predecessor) $r_1 $ needs to communicate the directions it will take to its follower (Successor) $r_2$ so that $r_2$ can know in which direction $r_1$ has moved. This applies to all predecessors and successors. The $Color$ variable represents the color displayed by the robot's light, and the $Target$ variable represents directions to the target vertex. The $NextTarget$ variable represents directions to the next target after reaching the target vertex. $Entry$ variable represents the direction of the two hops a robot moved, so that the robot knows the location of its follower.

  We explain communication between  Predecessor and Successor  and the restoration of Packed state after the movement with an example as shown in Fig. \ref{fig:color_app}.

\subsubsection{Communicating the movement directions:}\label{sec:communicatingmovement_app} The robots establish the Predecessor and Successor relationship among them by their order of appearance from the Door. A Predecessor $r$ communicates its destination $r.Target$ to a Successor by showing the colors corresponding to the port numbers at the vertex. Suppose $r_1$, $r_2$ and $r_3$ are located at $e$, $c$ and $a$, respectively as shown in Fig.~\ref{fig:color_app}.

\begin{figure}[H]

    \centering
    \includegraphics{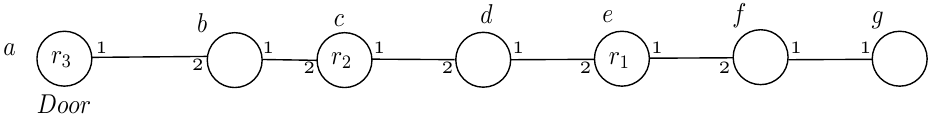}
    \caption{Communication of color from $r_1$ to $r_2$}
    \label{fig:color_app}
    
\end{figure}

$r_1$ determines that it will move to vertex $g$. Now, it has to communicate that to $r_2$, so that $r_2$ can follow $r_1$. First $r_1$ shows the DIR color of port 1 corresponding to the edge $ef$ as it wants to take the path $ef-fg$, and $r_2$ responds by showing CONF color. Then $r_1$ sets its color to CONFC to confirm that it has seen the CONF color. Now, $r_2$ also sets its color to CONFC to inform $r_1$ that it is ready to receive the second color. (CONFC color is used to distinguish between two DIR colors that can be the same.) Next, $r_1$ shows the DIR color of port 1 corresponding to edge $fg$. $r_2$ confirms that it has seen the color by setting its color to CONF2. Once $r_1$ sees CONF2, it can move to $g$ after setting its own color to MOV. Now $r_2$ sets $e$ as its target (which was the old position of $r_1$ 
Then, $r_2$ does the same process to communicate DIR colors to $r_3$.

\subsubsection{Restoration of Packed state after movement:}
We describe this module with the Fig. \ref{fig:color_app}. After the leader $r_1$ moves to $G$ with color MOV, $E$ becomes empty, so the Packed state is now distorted. So $r_1$ changes its color to WAIT. Now, $r_2$ moves to $E$ with color MOV after communicating DIR colors to $r_3$. So, $C$ becomes empty. After that, $r_3$ moves to $C$ with color MOV without communicating DIR colors as it does not have any successor at this movement. As soon as $r_3$ leaves the Door, a robot $r_4$ is placed at the Door with color ON. After $r_3$ reaches $C$, it sees $r_4$ with color ON and $r_3.Color=$ MOV. So, it changes its color to CONF3. Now, $r_2$ also changes its color to CONF3 after seeing $r_3.Color =$ CONF3 and $r_2.Color =$ MOV. The leader $r_1$ sees $r_2$ with color CONF3 and $r_1.Color=$ WAIT, so it understands that the Packed state is achieved. Now, $r_1$ looks for new target to move to. 

\subsubsection{Transferring the leadership:} 
The current leader $r_1$ transfers its leadership if either of the two scenarios occurs.
First, $r_1$ gets stuck i.e., there are no other free vertices left to move to from the current vertex of the leader.
 Secondly, for each free vertex $v\in V_{r_1}^2$, $r_1$ finds at least one vertex $v' \in N_v^2 \cap V_{r_1}^3$ with a robot not in Finished state.
In both of the above cases, $r_1$ transfer the leadership to its successor $r_2$ by communicating the direction that points towards $r_2$.

\subsubsection{Detailed description:}

When a robot first appears at the Door, it initializes to the None state and sets color ON.
Let $r_1$ be the first robot that appears at the Door. $r_1$ does not find a robot on any adjacent vertices, so it changes its state to Leader. Now $r_1$ is the Leader and chooses a target vertex two hops away and moves to it with color MOV. As soon as the Leader $r_1$ leaves the Door, the next robot $r_2$ appears at the Door. At this time, $r_1$ is still in motion and is nearest to $r_2$. $r_2$ sees this and becomes the follower of $r_1$ and sets $r_2.Color=$ ON, where $r_1.Successor = r_2$ and $r_2.Predecessor = r_1$. After $r_1$ reaches its target, it sets $r_1.Color=$ WAIT to indicate that it is waiting for Packed state. When $r_2$ sees $r_1.Color=$ WAIT and $r_2.Color=$ ON, $r_2$ changes its color to CONF3. $r_1$ now chooses a new free vertex (if any) as $r_1.Target$ and communicate its directions to $r_2$ as described above. When $r_1$ gets confirmation from $r_2$ ($r_2$ sets its color to CONF2), $r_1$ moves to $r_1.Target$ with color MOV and again changes its color to WAIT to indicate that it is waiting for the chain to be in Packed state.

When the chain is in Packed state, the leader $r_1$ chooses a free vertex $v$ from $V_{r_1}^2$ as target, if all $v' \in N_v^2 \cap V_{r_1}^3$ are either unoccupied or having robots at Finished state. Then,
$r_1$ communicates the directions of $r_1.Target$ to $r_2$ as follows: $r_1$ displays DIR color corresponding to $r_1.Target.One$ by setting $r_1.Color$. $r_2$ sees this and stores the direction in $r_2.NextTarget.One$. $r_2$ confirms that it has seen the first DIR color by displaying CONF color. $r_1$ confirms that it has seen CONF on $r_2$ by setting $r_1.Color$ to CONFC (confirmation of confirmation). $r_2$ sees CONFC on $r_1$ and in turn, sets $r_2.Color$ to CONFC to show that it is ready to receive the second DIR color. The second DIR color $r_1.Target.Two$ is displayed by $r_1$ and is seen by $r_2$. $r_2$ stores this second direction in $r_2.NextTarget.Two$ and sets $r_2.Color$ to CONF2 to send confirmation that the second direction of $r_1$ is received.
Once $r_1$ has seen CONF2 in $r_2$, it moves two hops to $r_1.Target$ in Move phase by setting $r_1.Color=$ MOV. After reaching to $r_1.Target$, $r_1$ sets $r_1.Color=$WAIT. Now $r_2$ sets $r_2.Target$ based on the information stored in $r_2.NextTarget$. $r_2$ needs to reach the old position of $r_1$.

Finally, when the Leader $r_1$ can no longer find free vertices to move to, it communicates this information to its follower $r_2$ using the DIR color that points towards $r_2$. Then, $r_1$ changes its color to OFF and goes into the Finished state, and $r_2$ becomes the Leader and continues exploring the graph.

 When a robot moves to its target vertex, it sets $r.Entry$ to the directions of the two hops it moved under \emph{Entry} variable (so that the robot knows the location of its follower).

\subsubsection{Pseudocode of IND Algorithm:} 
The subroutine \textsc{Communication($r.Target$)} communicates the target set by a robot $r$ to its successor. 
The subroutine \textsc{Receive($r.Predecessor.Target$)} works when $r$ receives directions from its predecessor. \textsc{Packed\_State($r,r.Successor$)} works for achieving packed state after the movement of the robots. \textsc{Leadership\_Transfer($r$)} works when the current leader $r$ finds no vertex to move to.

\begin{algorithm2e}
	
	\If{$r.Target$ is set \& $r.Successor.Color=$CONF3}
	{
		Set $r.Color$ to match $r.Target.One$ \algorithmiccomment{(Showing the first DIR color)}\\
	}
	\ElseIf{$r.Target$ is set \& $r.Successor.Color=$ CONF}
	{ Set $r.Color=$ CONFC \algorithmiccomment{(Confirming the confirmation sent by successor)}\\
	}
	\ElseIf{$r.Target$ is set \& $r.Successor.Color=$ CONFC}
	{ Set $r.Color$ to match $r.Target.Two$ \algorithmiccomment{(Showing the second DIR color)}
	}
	\ElseIf{$r.Target$ is set \& $r.Successor.Color=$CONF2}
	{Set $r.Color=$ MOV 
		\algorithmiccomment{(Moving towards the target)}\\
		Move to $r.Target$\\
	}
	\caption{\textsc{Communication}($r.Target$)}
\end{algorithm2e}

\begin{algorithm2e} %% Receive predecessor's target
        \If{$r.NextTarget$ is not set}
        {
            \If{$r.Color=$ CONF3 \& $r.Predecessor$ shows a DIR color}
            {
                Store that shown color as $r.NextTarget.One$\\
                Set $r.Color=$ CONF
                \algorithmiccomment{(Confirmation for the first DIR color)}\\
            }
            \ElseIf{$r.Color=$ CONF \& $r.Predecessor.Color=$ CONFC}
            {
                Set $r.Color=$ CONFC
                \algorithmiccomment{(Ready to receive the second DIR color)}\\
            }
            \ElseIf{$r.Color=$  CONFC \& $r.Predecessor$ shows a DIR color}
            {
                 Store that shown color as $r.NextTarget.Two$\\
                Set $r.Color=$ CONF2
                \algorithmiccomment{(Confirmation for second DIR color)}\\
       }
   }
   \caption{\textsc{Receive}($r.Predecessor.Target$)}
   \label{receive_app}
\end{algorithm2e}

\begin{algorithm2e} %% Packed Algo
        \If{$r.Color=$ MOV \& $r.Successor.Color=$ ON}
        {
             Set $r.Color=$ CONF3 \algorithmiccomment{(For $r.Successor$ being  at the Door)}\\
            }
        \ElseIf{$r.Color=$  MOV \& $r.Successor.Color=$CONF3}
        {
            Set $r.Color=$  CONF3 \algorithmiccomment{(For any other pair of predecessor and successor)}\\
        }
        \ElseIf{$r.Color=$ WAIT \& $r.Successor.Color=$ CONF3}
        {
             return \algorithmiccomment{(For the leader and its successor)}\\
            }

       \caption{\textsc{Packed\_State}($r$, $r.Successor$)}
    \label{packedstate_app}
\end{algorithm2e}

\begin{algorithm2e}
	\If{$r.Successor$ is set}
	{
	$r$ sets DIR color to point towards its Successor\\
	$r.Color=$OFF \\
	Change $r.State$ to Finished \\
}
\Else
{	$r.Color=$OFF \\
	Change $r.State$ to Finished \\
	}
\caption{\textsc{Leadership\_Transfer}($r$)}
\label{leadershiptransfer_app}
\end{algorithm2e}

\begin{algorithm2e}
	\If {$r.State$ is Leader}
		{
				\If {$r.Target$ is not set}
				{
						\If {$r.Entry$ is set and $r.Successor.Color=$ CONF3}
						{
							\If{$\exists$ a vertex $v \in  V_r^2$ such that all $v' \in V_r^3 \cap N_v^2$ is either unoccupied or occupied with robots in Finished state}
							{
								$r$ sets $v$ as $r.Target$ by setting $r.Target.One$ and $r.Target.Two$ \\ 
								\textsc{Communicate}($r.Target$)  \\
								Set $r.Color=$ WAIT\\
								$r$ clears $r.Target$ by clearing $r.Target.One$ and $r.Target.Two$\\
								\textsc{Packed\_State}($r$, $r.Successor$)\\
							}
							\Else 
							{
								\textsc{Leadership\_Transfer}($r$)
							}
						}
					}
				}
			
\If{$r.State$ is Follower}
		{
			\If {$r.NextTarget.One$ is not set}
			{
				 	Receive($r.Predecessor.Target$)\\
				 }
			
			\Else
			{
				\If{$r.Target$ is set}
				{
						\textsc{Communicate}($r.Target$) \\
						Clear $r.NextTarget.One$ and $r.NextTarget.Two$\\
						\textsc{Packed\_State}($r$, $r.Successor$)\\
					}
				}
			}

\If{$r.State$ is None \& $r.Color=$ON}
{
			\If {$r$ does not find any robot within a distance of 2 hops}
			{
					Find a vertex $v \in V_r^2$\\
					Set $v$ as $r.Target$ by setting $r.Target.One$ and $r.Target.Two$\\
					Change $r.State$ to Leader\\
					Set $r.Color=$ MOV \\
					$r$ moves to $r.Target$ \algorithmiccomment{($r$ moves to $r.Target.Two$)}\\
					Set the direction of the 2 hops $r$ moves as $r.Entry$\\
					$r$ sets $r.Color$ to WAIT\\
				}
			\Else
			{
					$r$ chooses the nearest robot as $r.Predecessor$\\
					Change $r.State$ to Follower\\
					\If{$r.Predecessor.Color=$ WAIT \& $r.Color=$ON}
					{
						Set $r.Color=$ CONF3\\
					}
			}
		}

\caption{IND}
\label{ind_app}
\end{algorithm2e}

\subsubsection{Example} \label{exampleIND_app}
Now, we illustrate the execution procedure of Algorithm (IND) with the following example.
The location of the Door and the initial setting is shown in Fig.~\ref{fig:indflow1_app}. The robot $r_{1}$ first appears at the Door in Fig.~\ref{fig:indflow2_app}. As soon as $r_{1}$ moves away from the Door, robot $r_{2}$ appears there and sets $r_{1}$ as its predecessor. This situation is shown in Fig.~\ref{fig:indflow3_app}. In Fig.~\ref{fig:indflow4_app}, $r_{1}$ reaches its target vertex but $r_{2}$ does not move until $r_{1}$ has moved away from its current position. $r_{1}$ reaches a new vertex and $r_{2}$ follows it, $r_{3}$ appears at the Door and follows $r_{2}$ which is shown in Fig.~\ref{fig:indflow5_app}. Observe that $r_{1}$ is stuck so it transfers its leadership to its follower, i.e., $r_{2}$. In Fig.~\ref{fig:indflow6_app}, $r_{2}$ reaches a new target vertex and its followers follow. 
\begin{figure}
    \label{indfig_app}
    \centering
    \begin{subfigure}{.45\textwidth}
        \centering
        \includegraphics[scale=0.15]{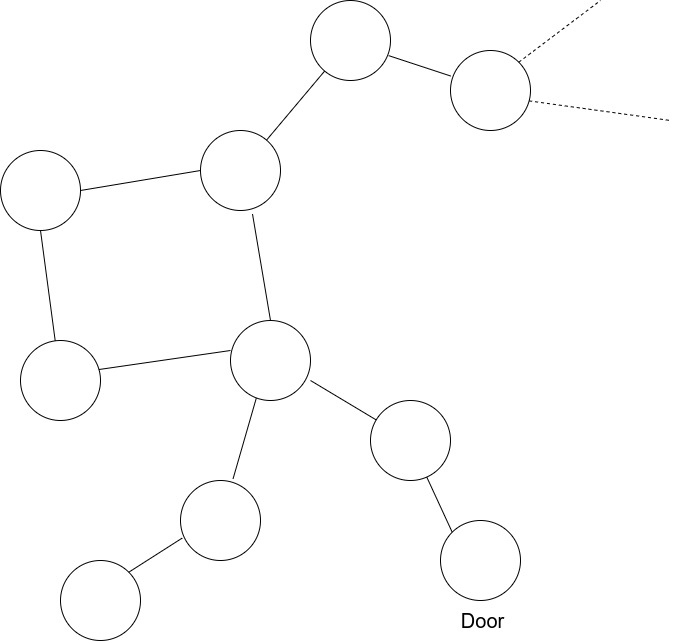}
        \caption{Location of the Door and \\ initial setting of the graph.}
        \label{fig:indflow1_app}
    \end{subfigure}\hfill
    \begin{subfigure}{.450\textwidth}
        \centering
        \includegraphics[scale=0.15]{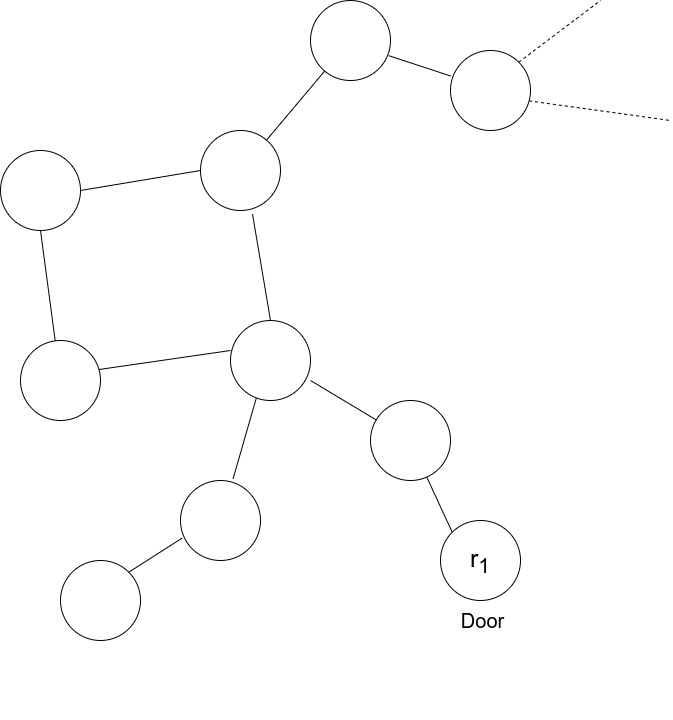}
        \caption{The robot $r_1$ appears at the Door \\ and becomes the Leader.}
        \label{fig:indflow2_app}
    \end{subfigure} %\vspace{-1em}

    \begin{subfigure}{.45\textwidth}
        \centering
        \includegraphics[scale=0.15]{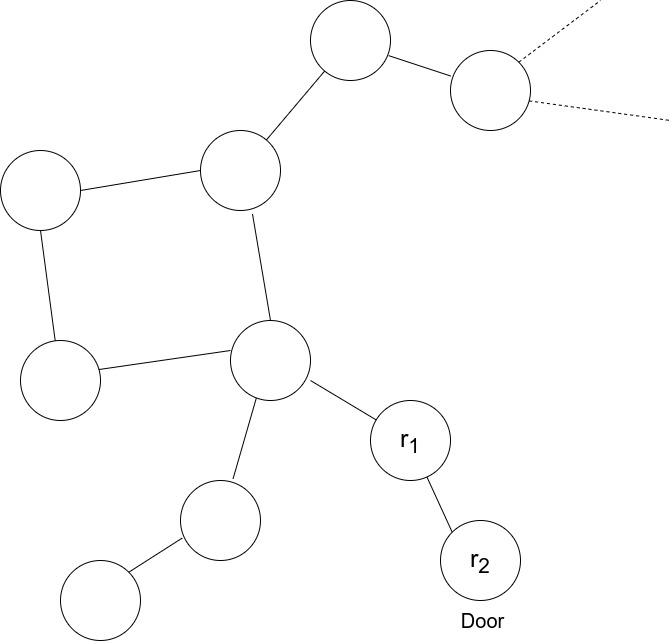}
        \caption{As soon as $r_1$ moves away from \\ the Door and $r_2$ becomes its follower.}
        \label{fig:indflow3_app}
    \end{subfigure}\hfill
    \begin{subfigure}{.450\textwidth}
        \centering
        \includegraphics[scale=0.15]{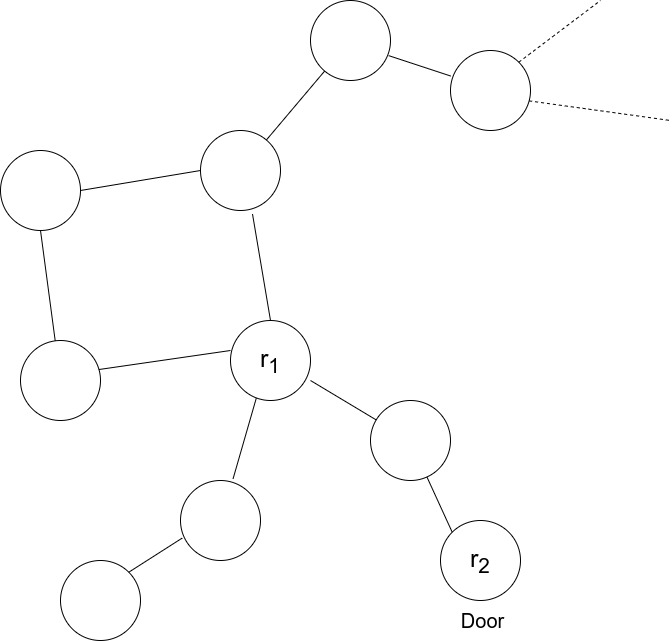}
        \caption{$r_1$ reaches its target vertex.}
        \label{fig:indflow4_app}
    \end{subfigure}
    \begin{subfigure}{.45\textwidth}
        \centering
        \includegraphics[scale=0.15]{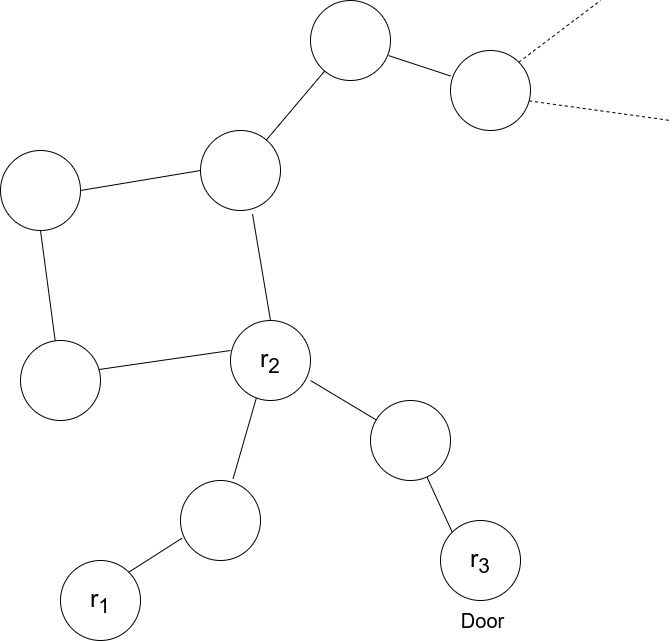}
        \caption{$r_1$ transfers its leadership to $r_2$ \\ as there are no free vertex for it.}
        \label{fig:indflow5_app}
    \end{subfigure}\hfill
    \begin{subfigure}{.45\textwidth}
        \centering
        \includegraphics[scale=0.15]{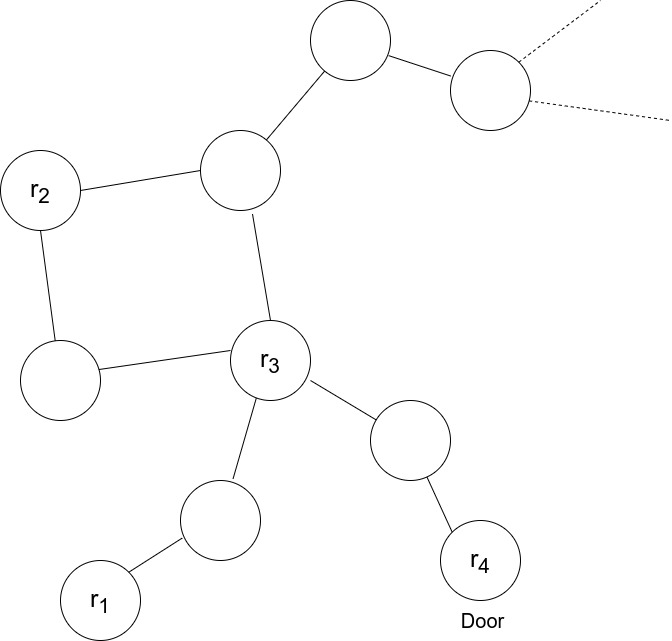}
        \caption{$r_2$ is now the Leader and moves to new free vertices.}
        \label{fig:indflow6_app}
    \end{subfigure}
    \begin{subfigure}{.5\textwidth}
        \centering
        \includegraphics[scale=0.15]{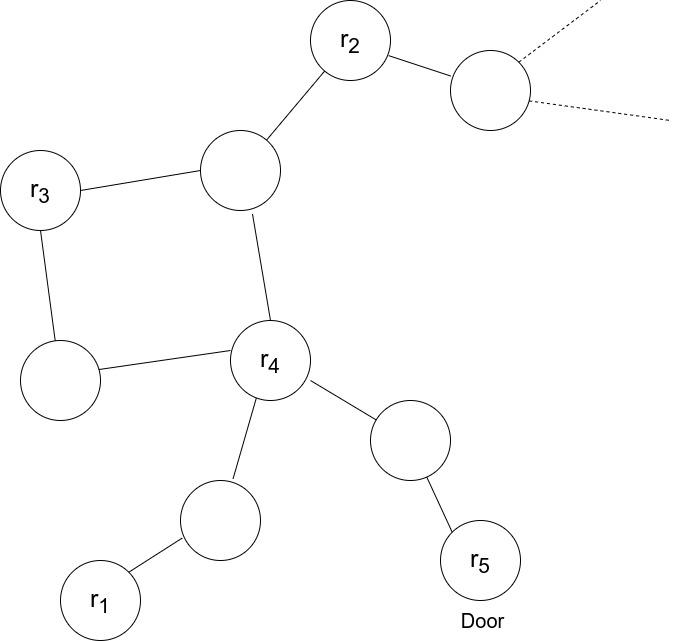}
        \caption{$r_2$ keeps exploring the graph.}
        \label{fig:indflow7_app}
    \end{subfigure}
    \caption{An example execution of the IND algorithm}
    % \vspace{-3em}
\end{figure}

\subsection{Analysis of IND Algorithm}
First, we present few lemmas that establish the behavior of the robots. The proofs of the following lemmas and theorem can be found in Appendix.

\begin{lemma}\label{lem1_app}
    There can be at most one Leader robot, and  the Leader robot $r_1$ moves to a free vertex $v \in V_{r_1}^2$ such that every $v' \in N_v^2 \cap V_r^3$ is either unoccupied or occupied by robots in Finished state.
\end{lemma}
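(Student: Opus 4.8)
The plan is to establish Lemma~\ref{lem1_app} in two parts, corresponding to its two clauses: uniqueness of the Leader and the safety property of the target vertex the Leader picks.

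\textbf{Uniqueness of the Leader.} First I would argue by induction on the sequence of LCM cycles that at most one robot is ever in the Leader state at any time. The base case follows from the detailed description: $r_1$ is the first robot to appear, finds no robot within 2 hops, and is the unique robot that transitions None $\to$ Leader. For the inductive step, I would examine the only two places in Algorithm~\ref{ind_app} where the state changes: (i) a None robot becomes Leader only when it sees no robot within 2 hops, which after $r_1$'s entry never happens again because the buffer vertices $d_i'$ keep every newly-arrived robot within 2 hops of the existing chain (this uses the construction $d_i \to d_i' \to v_i$ and the fact that consecutive chain robots are within 2 hops, a property I would state as needing the Packed-state invariant); and (ii) a Follower becomes Leader only via \textsc{Leadership\_Transfer}, invoked by the current Leader, which simultaneously sets the old Leader to Finished via $r.Color=$ OFF and State $=$ Finished. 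Hence leadership is handed off, not duplicated: the count of Leader robots is invariant at $1$ after $r_1$'s first move. I would also note the edge case before $r_1$ moves, where the count is transiently $0$, which is harmless.

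\textbf{Safety of the target.} The second clause is essentially a direct reading of the guard in Algorithm~\ref{ind_app}: a Leader $r$ sets $v$ as $r.Target$ only inside the \textbf{if} block whose condition is ``$\exists\, v \in V_r^2$ such that all $v' \in V_r^3 \cap N_v^2$ is either unoccupied or occupied with robots in Finished state,'' and by definition $r$ is at the vertex from which $V_r^2, V_r^3$ are measured, with $r.Target$ two hops away (so $v \in V_r^2$). The only nontrivial point is to confirm that the vertex $v$ chosen this way is genuinely \emph{free} in the sense defined in the model, i.e.\ none of its neighbors is occupied. Here I would observe that the neighbors of $v$ lie in $N_v^1 \subseteq N_v^2 \cap V_r^3$ (since $v$ is 2 hops from $r$, its neighbors are at most 3 hops from $r$), so the guard already forces every neighbor of $v$ to be empty or Finished. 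For the case ``occupied by a robot in Finished state'': such a robot displays OFF and will never move again, but it still occupies a vertex, so strictly $v$ would not be \emph{free}. I would resolve this by either (a) appealing to the intended reading that Finished robots sit on MIS vertices and the Leader deliberately keeps two-hop spacing so that a Finished neighbor cannot actually arise for a valid target, or (b) restating the lemma's conclusion exactly as the guard gives it, which is what the lemma text already does (``either unoccupied or occupied by robots in Finished state''). I will go with (b), matching the statement verbatim.

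\textbf{Main obstacle.} The delicate part is clause (i) of the uniqueness argument: ruling out a second robot ever seeing ``no robot within 2 hops.'' This requires knowing that the chain never fragments — that a new robot at a Door always has the tail of the existing chain within its 2-hop visibility — which in turn depends on the Packed-state invariant (consecutive chain members are exactly 2 hops apart, and a new robot enters at $d_i$ with $d_i'$ and then $v_i$ within 2 hops of it). I would therefore either prove a small auxiliary invariant first (``whenever a robot arrives at the Door, some non-Finished robot is within its 2-hop set''), or forward-reference the Packed-state maintenance that later lemmas establish. The remaining steps — the handoff bookkeeping in \textsc{Leadership\_Transfer} and the syntactic reading of the target guard — are routine case checks against the pseudocode.
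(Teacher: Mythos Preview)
Your proposal is correct and follows the same two-part structure as the paper: uniqueness of the Leader via the handoff mechanism in \textsc{Leadership\_Transfer} (the successor becomes Leader only after seeing OFF on its predecessor), and the safety property by reading the guard in Algorithm~\ref{ind_app} directly. You are in fact more careful than the paper in flagging the auxiliary invariant that a newly-placed robot always sees a non-Finished robot within two hops---the paper simply asserts that later robots ``can become a Leader only after the previous one reaches the Finished state'' without justifying why the None $\to$ Leader branch cannot fire again; note, though, that the reason is less about the buffer vertex $d_i'$ and more that each move is exactly two hops and a new robot is placed immediately when the Door empties, so its predecessor is within $V_r^2$ when it first looks.
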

\begin{proof}\label{lem1proof_app}
    In the rules for \emph{Transferring the Leadership}, when a Leader $r_1$ signals to its successor $r_2$ that it is stuck, $r_2$ can become the Leader only after the previous Leader $r_1$ has switched to Finished state (recognized by OFF color on $r_1$).
    The first robot placed becomes the Leader, and the robots appearing next can become a Leader only after the previous one reaches the Finished state.
    Therefore, there can be at most one Leader at any time during the dispersion.

    A free vertex has none of its adjacent vertices occupied by a robot.
    As the current Leader $r_1$ can only move when the chain is in a Packed state, it chooses a vertex in $V_{r_1}^2$ that is free by checking all the vertices adjacent to potential target vertex are not occupied by any robot.
    This can be done since the robots have a visibility range of 3.
    Further, there is no possibility of this target vertex or any vertex adjacent to it being occupied by any other robot as the Leader is allowed to move only when the chain is in the Packed state. If there is a free vertex $v \in V_{r_1}^2$ as a potential target, then every $2$ hops neighbor $v'$ of $v$ ($v' \in N_v^2 \cap V_{r_1}^3$), that is visible to $r_1$ needs to be either free or occupied by robots in Finished state. Due to this condition, a chain does not cross itself as shown in Fig. \ref{selfcross_app}. If no such free vertex $v$ is found, then $r_1$ transfers the leadership to $r_2$ and switches to Finished state. \qed
\end{proof}

\begin{corollary}
\label{corollary1_app}
    The leader $r$ moves to a free vertex $v$ such that every vertex $v' \in N_v^2 \cap V_{r}^3$ is either unoccupied or occupied by a robot in Finished state. Consequently, the chain never crosses itself.
\end{corollary}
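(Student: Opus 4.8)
\textbf{Proof proposal for Corollary~\ref{corollary1_app}.}

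The plan is to derive the corollary directly from Lemma~\ref{lem1_app}, which already establishes the key invariant: whenever the (unique) Leader $r$ moves, it moves to a free vertex $v \in V_r^2$ such that every $v' \in N_v^2 \cap V_r^3$ is either unoccupied or occupied by a robot in the Finished state. The first sentence of the corollary is therefore just a restatement of the conclusion of Lemma~\ref{lem1_app}, and I would dispose of it in one line. The substance is the ``consequently'' clause — that under this target-selection rule the chain of followers never crosses itself — so the proof effort should go there.

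For the self-crossing claim I would argue as follows. A crossing would mean that at some point two edges of the chain path $P = \{v_1, \ldots, v_L\}$ (recording every vertex ever occupied by the Leader, in order) share a vertex, or equivalently that the Leader, while choosing a new target $v$, picks a vertex adjacent (within $2$ hops) to some vertex already on $P$ that is still occupied by an active (non-Finished) robot. First I would observe that because robots move two hops at a time along the chain and maintain the Packed state, the occupied vertices of an active chain are exactly the even-indexed-away vertices $v_1, v_3, \ldots$ of $P$, and every vertex of $P$ is within $2$ hops of one of these occupied vertices. Hence if the Leader's prospective target $v$ were within $2$ hops of any vertex of $P$ carrying an active robot, then some $v' \in N_v^2 \cap V_r^3$ would be occupied by a non-Finished robot — contradicting the selection rule of Lemma~\ref{lem1_app}. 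Therefore the newly added segment of the chain stays at graph-distance $> 2$ from every still-active portion of the existing chain, which in particular forbids sharing a vertex: no crossing occurs. Finished robots are permitted to be near the target, but since a Finished robot never moves again, the portion of $P$ it occupies is permanently ``retired'' and cannot participate in a future crossing either; I would spell out that a crossing involving only retired vertices is impossible because those vertices are fixed and were laid down consistently when they were active.

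The main obstacle I anticipate is making the notion of ``the chain crosses itself'' precise enough to argue about, and handling the interaction with leadership transfer: when $r$ transfers leadership to $r_2$ and becomes Finished, the new Leader $r_2$ continues from a vertex that is part of the old chain. I would need to check that the $V_{r_2}^3$-visibility condition applied by $r_2$ still sees whatever active robots remain on the old chain — which it does, since the active robots form a contiguous packed suffix of the chain adjacent to $r_2$ — so $r_2$ cannot choose a target that collides with them. I expect this to be a short verification rather than a deep difficulty, since Lemma~\ref{lem1_app} has already absorbed the hard geometric content; the corollary is essentially a clean corollary, and the proof can lean on Fig.~\ref{selfcross_app} for intuition while the three sentences above supply the argument.
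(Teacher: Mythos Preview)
Your proposal is correct and aligned with the paper's approach: the paper states Corollary~\ref{corollary1_app} with no separate proof at all, treating both sentences as immediate consequences of Lemma~\ref{lem1_app} (the non-crossing claim is actually dispatched inside the proof of Lemma~\ref{lem1_app} with a one-line appeal to Fig.~\ref{selfcross_app}). You go further than the paper by actually spelling out why the target-selection rule blocks a self-crossing and by flagging the leadership-transfer case, but these elaborations are compatible with --- and more careful than --- what the paper provides.
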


\begin{lemma}\label{lem2_app}
    The Robots do not collide.
\end{lemma}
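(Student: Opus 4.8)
The plan is to argue that at every point in the execution, the set of occupied vertices together with the handshake protocol prevents two robots from ever targeting the same vertex or traversing the same edge simultaneously. I would organize the argument around the structural invariant that, outside of transient moves, the robots occupy a ``Packed'' alternating chain $v_1, v_3, \ldots, v_{L-2}$ together with the Door, so consecutive occupied vertices are exactly two hops apart and no vertex is shared. The only robots permitted to move are (i) the unique Leader (by Lemma~\ref{lem1_app}, there is at most one Leader, and it moves only from a Packed state), (ii) a Follower executing its stored $NextTarget$ to occupy the vacated position of its predecessor, and (iii) a freshly-arrived robot at the Door. I would show each of these three movement types is collision-free, and then show that they cannot interfere with one another.

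First, for the Leader: by Lemma~\ref{lem1_app} / Corollary~\ref{corollary1_app}, the Leader only moves to a \emph{free} vertex $v \in V_{r_1}^2$, i.e.\ one whose entire neighborhood is unoccupied or Finished, and it does so only when the chain is in a Packed state — so no Follower is in motion and no other robot can be heading toward $v$ or along the two edges $r_1$ traverses. Since a robot moves exactly two hops and the Leader's old position becomes empty, the target and its one intermediate vertex are both verified unoccupied from the snapshot (visibility range $3$ suffices). Second, for a Follower $r_2$ with predecessor $r_1$: the handshake (Algorithm~\textsc{Communication}/\textsc{Receive}) guarantees $r_2$ only sets $r_2.Target$ after $r_1$ has displayed MOV and physically vacated its vertex; $r_2$'s target is precisely $r_1$'s former position, which is now empty, and the two hops $r_2$ traverses are the same two edges $r_1$ just used and hence currently clear. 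Crucially, $r_2$ does not begin moving until $r_1$ has reached WAIT at its new location, so $r_1$ and $r_2$ are never on the same edge at the same time; inductively the same holds for every predecessor–successor pair down the chain, and since a successor only follows after its predecessor has moved, at most one robot is ever traversing any given edge. Third, for the new Door robot $r_4$: it appears only after the Door vertex is vacated and it does not move (it sets color CONF3 and waits) until it becomes the tail Follower, at which point case (ii) applies.

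Finally I would rule out cross-interference between the three types. Because all movement is gated by the Packed-state detection (colors WAIT $\to$ CONF3 propagating down the chain) and the predecessor/successor handshake colors, the moves are effectively serialized: the Leader moves, then the chain ``shifts'' one robot at a time from front to back, and only after the entire shift completes and CONF3 propagates back up does the Leader move again. Under the ASYNC scheduler a robot may be slow, but a robot's \emph{Move} to a vertex is only enabled when its snapshot certifies that vertex and the intervening vertex are empty, and since movement is non-instantaneous but a robot occupies its source until the move completes, no second robot can have obtained a snapshot certifying the same vertex empty. The main obstacle I anticipate is precisely this asynchrony argument: I must carefully check that the color-handshake invariants are stable enough that a stale snapshot cannot cause two robots to simultaneously conclude a shared vertex is free — in particular handling the window between a predecessor showing MOV and the successor observing it, and the window where the Leader re-enters WAIT. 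I would close this by a case analysis on the finite set of color configurations a predecessor–successor pair can be in, showing that in every configuration at most one of the two has an enabled \emph{Move}, and that the target of an enabled \emph{Move} is a vertex no other robot can simultaneously target.
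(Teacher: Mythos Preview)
Your proposal is correct and follows essentially the same approach as the paper: both arguments rest on (i) the unique predecessor--successor relationship, so a follower only ever moves into the vertex just vacated by its predecessor and not before, and (ii) Corollary~\ref{corollary1_app}, which guarantees the chain never crosses itself, so distant chain segments cannot target the same vertex. The paper's proof is considerably terser---it simply asserts that the predecessor moves first and invokes the no-self-crossing corollary---whereas you spell out a three-way case split (Leader, Follower, Door robot), explicitly use the Packed-state gating to serialize moves, and attempt a careful treatment of stale snapshots under ASYNC. One small inaccuracy: your claim that ``$r_2$ does not begin moving until $r_1$ has reached WAIT'' is stronger than what the protocol actually enforces; $r_2$'s move is gated on $r_3$ showing CONF2, not on $r_1$ showing WAIT, but the weaker (and sufficient) invariant that $r_1$ has already vacated its old vertex before $r_2$ can target it still holds.
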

\begin{proof}\label{lem2proof_app}
     When a robot appears at the Door, it has only one neighbor robot except for the first robot.
    That neighbor robot will become the Predecessor, which the robot will follow during the movement.
    The predecessor shows the directions to the Successor in which it will move,   before moving so that the Successor can always follow it.
    After the target is set, the Predecessor moves first and only then the follower moves in the direction of its Predecessor.
    Each Successor robot has one Predecessor, which means they cannot collide with each other. Also, from Corollary \ref{corollary1_app}, a chain does not cross itself and hence the collision never happens.  \qed  
    
\end{proof}

\begin{lemma}\label{lem3_app}
    No two robots in the Finished state occupy adjacent vertices.
\end{lemma}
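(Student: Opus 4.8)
The plan is to argue that a robot reaches the Finished state only in one of two ways---either it was a Leader that executed \textsc{Leadership\_Transfer}($r$) (having found no admissible target), or it inherited leadership and later did the same---and in both cases the vertex it finally occupies was, at the moment it stopped moving, a \emph{free} vertex. The key observation is that once a robot enters the Finished state it never moves again and its neighbourhood can therefore only gain Finished robots through \emph{other} robots arriving later. So it suffices to show that no later robot can come to rest on a vertex adjacent to an already-Finished robot's vertex.

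First I would establish the base fact: whenever a Leader $r_1$ sets a target $v$, Lemma~\ref{lem1_app} (and Corollary~\ref{corollary1_app}) guarantees $v$ is free and every $v'\in N_v^2\cap V_{r_1}^3$ is unoccupied or Finished; moreover $r_1$ only moves in the Packed state, so no follower of the current chain is poised to enter the forbidden neighbourhood. Hence immediately after $r_1$ lands on $v$, the only occupied vertices within two hops of $v$ are Finished ones, and in particular no non-Finished robot is adjacent to $v$. Next I would track what happens to $v$ afterwards: the robot on $v$ (whether it stays Leader or becomes a follower via leadership transfer) will be the one that eventually goes Finished on some vertex; I need to show that the final resting vertex still has no non-Finished neighbour that could later turn Finished adjacently. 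This uses the chain-does-not-cross-itself property (Corollary~\ref{corollary1_app}): the chain occupies alternate vertices, so two consecutive robots of the chain are always at distance exactly two, never adjacent, and when the old Leader goes Finished its successor—now the new Leader—sits two hops away, not adjacent. Thus the newly Finished vertex has, among chain robots, only a distance-$2$ neighbour, not a distance-$1$ one.

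The remaining case is a \emph{future} Leader choosing a target adjacent to an already-Finished vertex $u$. But any target $v$ chosen by a Leader $r$ must satisfy: every $v'\in N_v^2\cap V_r^3$ is unoccupied or Finished, and—crucially—$v$ itself is free, meaning no neighbour of $v$ is occupied at all, in particular no Finished robot sits on a neighbour of $v$. Since the Leader has visibility range $3$, a Finished robot at distance $1$ from the candidate target is always detected, so such a $v$ is simply never selected. Therefore no robot ever stops on a vertex adjacent to a Finished robot, and symmetrically, since the first of the two to finish had no non-Finished neighbour that could later finish adjacent to it (they'd be at distance $\ge 2$ as argued above), the set of Finished vertices stays an independent set throughout. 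Chaining these facts over the (finite) sequence of robots completes the induction.

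The main obstacle I anticipate is the middle step: carefully ruling out the scenario where a robot that was \emph{not} the Leader when its resting vertex was chosen to be free later becomes Finished there, while in the interim some other chain or the same chain's motion brought a soon-to-Finish robot adjacent to it. Handling this cleanly requires the full strength of the ``chain never crosses itself'' invariant together with the two-hop spacing of the Packed state, and I would want to phrase the induction hypothesis as ``at every configuration, the Finished vertices form an independent set \emph{and} no non-Finished robot is adjacent to a Finished vertex'' so that both halves are maintained simultaneously.
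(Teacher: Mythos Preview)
Your argument is correct in substance, but it is considerably more elaborate than the paper's. The paper disposes of the lemma in three sentences: a robot enters the Finished state only from the Leader state; by Lemma~\ref{lem1_app} a Leader moves only to free vertices; hence every vertex ever occupied was free at the moment it was first taken, so no two occupied vertices---a fortiori no two Finished ones---are adjacent. The implicit supporting fact is the Remark immediately following the lemma: once a vertex is occupied in the Packed state it remains occupied, so the occupied set only grows and each new addition is, by freeness, non-adjacent to everything already in it.

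Your plan brings in machinery that is not needed here. The case split between ``original Leader'' and ``inherited leadership'' collapses, since both are simply a Leader executing \textsc{Leadership\_Transfer}; the chain-non-crossing corollary, the two-hop spacing, and the strengthened induction hypothesis (``no non-Finished robot is adjacent to a Finished vertex'') are all subsumed by the single observation that \emph{every} occupied vertex---Finished or not---was at some point a Leader's target and hence free when first taken. Note also a small misreading of the algorithm: a Leader does not ``become a follower via leadership transfer''; it goes directly to Finished, and only its successor changes role (Follower $\to$ Leader). This slip does not break your argument, but it is the source of the extra ``middle step'' case you worry about. If you track the full occupied set rather than just the Finished set, that case disappears and the induction becomes the one-line monotonicity argument the paper uses.
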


\begin{proof}\label{lem3proof_app}
    A robot can go to the Finished state only from the Leader state.
    From Lemma~\ref{lem1_app}, we know that a Leader robot moves only to free vertices.
    As only free vertices are occupied, the robots are not present in adjacent nodes. \qed
\end{proof}

From Lemma~\ref{lem3_app}, we can say that when a robot enters the Finished state, none of the vertices adjacent to it are occupied by a robot which means eventually, all the vertices occupied by robots form an independent set. 

\begin{remark}
If a vertex is occupied by a robot in the packed state, it remains occupied thenceforth.
\end{remark}

\begin{theorem}\label{thm1_app}
    Algorithm IND fills a maximal independent set.
\end{theorem}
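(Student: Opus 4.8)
The claim unpacks into three parts: (i) the execution halts with every robot in the Finished state; (ii) the set $M$ of vertices occupied at halting is an independent set; and (iii) $M$ is maximal, i.e.\ every vertex outside $M$ is adjacent to a vertex of $M$. Part (ii) follows at once from Lemma~\ref{lem3_app} once (i) is in hand, since at halting each occupied vertex carries a Finished robot and Lemma~\ref{lem3_app} forbids two such vertices from being adjacent. The proof therefore concentrates on (i) and (iii), and the organising observation --- already implicit in the description of the moves --- is that the active chain together with the frozen (Finished) robots performs a depth-first traversal of $G$: the simple path from the Door to the current Leader is the DFS stack; an exploratory two-hop move of the Leader onto a fresh free vertex is a descent along a tree edge (and, since the Door thereby empties, it injects a new robot); and a \textsc{Leadership\_Transfer} is a backtracking step in which the stuck Leader freezes into the Finished state --- ``closing'' its vertex --- while its Successor, two hops back, takes over as Leader.

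For (i), I would first record that the traversal is consistent: by Corollary~\ref{corollary1_app} the chain is always a simple path, so the Leader never re-enters the stretch currently held by its own chain, and by Lemma~\ref{lem1_app} every exploratory move lands on a free vertex whose visible $2$-hop neighbourhood is clear of non-Finished robots. A standard depth-first accounting then bounds the total number of exploratory moves in terms of $G$ --- each such move out of a vertex is charged to a distinct, previously unclosed direction there, and $G$ is finite --- so only finitely many robots enter through the Door and the run performs only finitely many effective \emph{LCM} steps. When the current Leader has no exploratory move available, the stuck-Leader rule fires; since \textsc{Leadership\_Transfer} always turns the Leader into a Finished robot and hands leadership back along the chain, with the Door-side robot (having no Successor) becoming Finished via the \textbf{Else} branch, the backtracking cascades to the Door and the configuration stabilises with every robot Finished. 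Here the preceding Remark is used to note that a vertex carrying a Finished robot stays occupied, so $M$ is well defined.

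For (iii), suppose towards a contradiction that a vertex $w$ is free at halting, i.e.\ $w$ and all of $N(w)$ are unoccupied. Reading the preceding Remark as monotonicity of the occupied set from one packed state to any later one, a vertex unoccupied at halting was unoccupied at every earlier packed state; hence $w$ stays free throughout the run, and no robot --- in particular no Leader --- is ever placed within distance $1$ of $w$. I would then argue that some Leader is nonetheless eventually at a vertex $u$ at distance exactly $2$ from $w$: since $G$ is connected, a part of $G$ can stay permanently outside the traversal only if every time a Leader could step towards it some non-Finished robot blocks the move by sitting in the $3$-hop window $N_v^2 \cap V_r^3$ of the candidate target $v$ (Lemma~\ref{lem1_app}), but such a blockage is transient, because every non-Finished robot is eventually frozen and the chain shrinks towards the Door under backtracking --- so no region is sealed off forever. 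Now fix the last instant at which a Leader $r$ occupies such a $u$; then $w \in V_r^2$, by monotonicity $w$ is still free, and by the choice of instant no non-Finished robot lies in $N_w^2 \cap V_r^3$ thereafter. Hence at this activation, or the next one of $r$, the vertex $w$ is a legal target in the sense of Lemma~\ref{lem1_app}, so $r$ moves onto $w$ --- and then, by the Remark, $w$ stays occupied, contradicting that $w$ is unoccupied at halting. (Should $r$ first take a different legal target and thereby change the situation at $u$, one inducts on the number of free vertices within $2$ hops of $u$, which strictly decreases.) Thus no free vertex survives: $M$ is dominating, and with Lemma~\ref{lem3_app} this makes $M$ a maximal independent set.

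The routine parts are (ii) and the terminal cascade of (i); the delicate part --- which I expect to be the main obstacle --- is the claim inside (iii) that a transient self-crossing blockage cannot permanently wall off a part of $G$ from the traversal, together with the existence of a well-defined ``last visit within two hops of $w$'' at which $w$ is simultaneously still free and unobstructed. Both hinge on a careful monotonicity argument for the occupied set (the Remark), together with the fact that the set of non-Finished robots eventually empties; I would isolate this as a separate lemma before invoking it in the proof of the theorem.
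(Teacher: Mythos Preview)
Your proof is correct in outline but takes a more elaborate route than the paper. The paper does not separately prove termination; it simply invokes Lemma~\ref{lem3_app} for independence and then, for maximality, picks a surviving free vertex $v$, uses connectedness to find a Finished robot $r_1$ exactly two hops from $v$, and case-splits on why $r_1$ switched from Leader to Finished: either every $v'\in N_v^2\cap V_{r_1}^3$ was unoccupied or carried a Finished robot (so $v$ was a legal target and $r_1$ could not have gone Finished), or a non-Finished robot blocked at some such $v'$, in which case the paper asserts that this robot eventually becomes Leader while at $v'$---still two hops from $v$---and then has $v$ as a clear target, giving the contradiction. Your organisation around the \emph{last} instant any Leader sits two hops from $w$, together with the fallback induction on the number of remaining free vertices near $u$, is doing the same work from a different angle; it has the virtue of making monotonicity (the Remark) and termination explicit, both of which the paper uses without stating, at the cost of heavier machinery. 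Note that your step ``by the choice of instant no non-Finished robot lies in $N_w^2\cap V_r^3$ thereafter'' does not follow immediately from the last-visit condition (a blocking follower at $v'$ need not become Leader while still at $v'$); this is exactly the same soft spot as the paper's one-line ``eventually the robot on $v'$ will become the leader,'' and your proposed standalone lemma on transient blockage would tighten either version.
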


\begin{proof}
    From Lemma~\ref{lem3_app}, the filled vertices form an independent set.
    Suppose the independent set is not maximal.
    As the graph is connected, there exists a vertex $v$ which is free and has a robot $r_1$ in the Finished state two hops away since the execution of the algorithms is done.
    Before the robot $r_1$ entered the Finished state, it was in the Leader state.
    The Leader $r_1$ switches into the Finished state in two cases. First, it cannot find any free vertex two hops away. Secondly, it finds a free vertex $v$ such that there exists a neighbor $v'$ of $v$ in $N_v^2 \cap V_{r_1}^3$ occupied by a robot not in Finished state.
    If all the neighbors of $v$ in $N_v^2 \cap V_{r_1}^3$ are unoccupied or have robots in Finished state, $r_1$ has a clear path to $v$. Hence, it cannot switch to Finished state, which leads to a contradiction. If the 2 hops neighbor $v'$ of $v$ has a robot not in Finished state, $r_1$ does not have a clear path to $v$. 
    In that case, $r_1$ transfers the leadership to its successor by communicating the direction pointing towards its successor and eventually the robot on $v'$ will becomes the leader. Since $v'$ is 2 hops away from $v$, the robot on $v'$ cannot switch to Finished state as it gets a clear path to the free vertex $v$, which again leads to contradiction.
    
    \qed
\end{proof}

\begin{lemma}
    Algorithm IND fills an MIS of $G$ with luminous robots having visibility range of 3 hops, $O(\log \Delta)$ bits of memory, and $\Delta+8$   colors.
\end{lemma}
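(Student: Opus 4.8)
The plan is to obtain correctness for free from Theorem~\ref{thm1_app}, which already establishes that Algorithm IND halts with the occupied vertices forming a maximal independent set of $G$, and then to verify the three resource bounds separately: the number of colors, the visibility range, and the persistent memory. Each of these is a bookkeeping argument over the description of the algorithm and its pseudocode.

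For the colors I would simply enumerate the palette from the \emph{Colors} paragraph: the fixed colors ON, CONF, CONFC, CONF2, CONF3, WAIT, MOV, OFF account for $8$, and the DIR colors --- one per port number in $[1,\Delta]$ --- account for $\Delta$ more, giving $\Delta+8$ in total. I would also remark that no rule in \textsc{Communication}, \textsc{Receive}, \textsc{Packed\_State}, \textsc{Leadership\_Transfer}, or Algorithm~\ref{ind_app} ever sets a robot's color to anything outside this set, so the count is exact.

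For the visibility range the governing constraint is the Leader's target-selection rule in Algorithm~\ref{ind_app}: the Leader chooses a free vertex $v \in V_r^2$ such that every $v' \in V_r^3 \cap N_v^2$ is unoccupied or holds a Finished robot. Deciding this predicate requires inspecting vertices two hops beyond a candidate that is itself two hops away, but the quantifier is truncated to $V_r^3$, so exactly $3$ hops are needed --- and this is precisely the visibility that Lemma~\ref{lem1_app} and Corollary~\ref{corollary1_app} use to guarantee both progress and the non-self-crossing property. I would then check that every remaining interaction uses only robots at distance at most $2$: a newly arrived robot detecting its unique neighbour at the Door, the predecessor--successor DIR/CONF/CONFC handshake, and the \textsc{Packed\_State} signalling all involve consecutive robots of the chain, hence vertices within $2$ hops, which a range of $3$ already covers.

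For memory I would walk through Table~\ref{tab2_app}: State ranges over the four values None/Follower/Leader/Finished and costs $O(1)$ bits; Successor, Predecessor, and each of the two components of Target, NextTarget, and Entry is a direction, i.e.\ a port number in $[1,\Delta]$, storable in $\lceil \log \Delta \rceil$ bits; and Color is one of $\Delta+8$ values, again $O(\log \Delta)$ bits. A constant number of $O(\log \Delta)$-bit fields gives $O(\log \Delta)$ bits of persistent storage in total, and together with Theorem~\ref{thm1_app} this completes the statement. I expect the one point needing care to be the visibility argument: one must confirm that no rule secretly requires a robot at distance $3$ to perform a further look (which would inflate the requirement to $4$), and that the ``$\cap V_r^3$'' truncation in the target predicate is exactly what the earlier correctness and non-crossing proofs rely on, so that $3$ hops is simultaneously sufficient for the algorithm to make progress and for the chain never to cross itself.
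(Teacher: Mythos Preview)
Your proposal is correct and follows essentially the same approach as the paper: enumerate the color palette to get $\Delta+8$, justify the $3$-hop visibility via the Leader's target-selection predicate, and tally the persistent variables to obtain $O(\log\Delta)$ bits. If anything, your version is slightly more thorough than the paper's own proof --- you explicitly invoke Theorem~\ref{thm1_app} for the MIS correctness and you account for the Successor, Predecessor, Entry, and Color fields in the memory count, whereas the paper's proof only lists State, Target, and NextTarget --- but the underlying argument is the same bookkeeping.
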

\begin{proof}
    
    The visibility range of 3 hops for robots executing IND algorithms is necessary. Otherwise, the robots cannot check if the target vertex is free or not.

    The robots require $O(\log \Delta)$ bits of memory to store the following: \emph{State} (4 states: 2 bits), \emph{Target} (directions to the target vertex: $2  \ceil{\log \Delta}$ bits \footnote{Two port numbers are needed to be stored corresponding to the movement.}), \emph{NextTarget} (directions to the vertex to which the robot has to move after the \emph{Target} vertex  is reached: $2  \ceil{\log \Delta}$ bits).

    The colors used by the robots are $\Delta$ colors to show the directions to the target of the robot, including the special color that points in the direction of Follower to switch to the Finished state. Initially when the robot is placed at the Door, the robot sets its color to ON.
    There are four additional colors (CONF, CONFC, CONF2 and CONF3) for confirming that the robot saw the signaled direction and confirmations of the Predecessor or the Successor and the Packed state is achieved, one color WAIT used when the Leader is waiting for the Packed State, one color MOV used during the movement and the OFF color.\qed
\end{proof}

Now we analyze the time complexity of the algorithm in terms of epochs.
To find the total time required by the algorithm, we first individually establish the time-bound on the movement of robots.
Consider a chain containing $i$ robots $\{r_1, r_2, \ldots, r_i\}$, where $r_1$ is the Leader and the foremost robot in the chain. The robots $r_1, r_2, \ldots, r_i$ are on alternative vertices on the path from the vertex occupied by the Leader to the Door vertex. Suppose the chain is in the Packed state. 
We determine the time required for two consecutive movements of the Leader.
We first determine the time required for all the robots in the chain to occupy the position of their Predecessor. As a result, a new robot appears at the Door, increasing the size of the chain. 
Next, we find the time required for the robots in the chain to set their colors to CONF2, indicating that they have received the movement direction of their Predecessor.
 We also find the time required for  the chain to reach in the Packed state again after the leader of the chain moves.
\begin{lemma}\label{t1lem_app}
    Algorithm IND takes at most $i$ epochs for all the robots in a chain of length $i$ to perform one MOVE operation.
\end{lemma}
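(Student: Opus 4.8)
The plan is to track the progress of the MOVE cascade along the chain, indexed by which robot has relocated. Recall that in a Packed chain $\{r_1,\ldots,r_i\}$ with the Leader $r_1$ at the front, one MOVE operation of the whole chain consists of $r_1$ moving two hops to its new target and each follower $r_j$ ($j\ge 2$) moving two hops onto the vertex previously occupied by $r_{j-1}$. The first thing I would establish is that these moves are \emph{totally ordered in time}: $r_j$ cannot move before $r_{j-1}$ has vacated its vertex, since $r_j$'s target is precisely that vertex, $r_j$ detects that $r_{j-1}$ has left (by $r_{j-1}$ switching from MOV to WAIT, or simply by no longer being adjacent), and the algorithm lets $r_j$ take its MOVE branch only after it has relayed its own direction down to $r_{j+1}$ via the handshake of Section~\ref{sec:communicatingmovement_app}. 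Hence the MOVEs happen exactly in the order $r_1,r_2,\ldots,r_i$, and consequently ``at least $k$ robots of the chain have moved'' is equivalent to ``the prefix $r_1,\ldots,r_{k'}$ has moved for some $k'\ge k$''.

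Next I would show that no epoch elapses without progress, until the cascade is done. Concretely, I would prove by induction on the epoch number $k$ (counting from the instant the Leader is cleared to move, i.e. its target is set and its successor shows CONF2) that by the end of epoch $k$ at least $\min(k,i)$ of $r_1,\ldots,r_i$ have completed their MOVE. For the base case, $r_1$ already satisfies the guard of the MOVE branch of the \textsc{Communication} subroutine, and since $r_1$ is activated at least once in epoch $1$, it moves in epoch $1$. For the inductive step, assume $\min(k,i)=k<i$ robots have moved by the end of epoch $k$; by the total ordering, $r_k$ has moved and its vertex is now free. In epoch $k+1$ the robot $r_{k+1}$ is activated; at that moment $r_k$ has already relocated (it did so in an epoch $\le k$), so $r_{k+1}$'s target is free and, having finished its own handshake with $r_{k+2}$, the MOVE guard is enabled for it, so $r_{k+1}$ moves during epoch $k+1$, yielding at least $k+1$ completed moves. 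Setting $k=i$ gives that all $i$ robots perform their MOVE within $i$ epochs (after which a fresh robot appears at the now-empty Door, extending the chain).

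The step I expect to be the main obstacle is ruling out a \emph{stalled} epoch in which the next robot $r_{k+1}$ is activated but declines to move. This needs a short case analysis of the colours of $r_{k+1}$ and its successor at that instant, to confirm that by the time $r_k$ has vacated its vertex the DIR/CONF/CONFC/CONFC/DIR/CONF2 exchange between $r_{k+1}$ and $r_{k+2}$ has already been carried out — which is where I would either invoke that these handshakes run down the chain before the Leader moves, or, if the statement is read as counting only the physical relocations, take ``the chain is in the configuration immediately preceding the cascade'' as the hypothesis and defer the handshake timing to the separate CONF2 bound. The remaining care concerns the ASYNC scheduler: one epoch guarantees only a single activation per robot, so the adversary may push $r_{k+1}$'s effective activation to the very end of epoch $k+1$, but it cannot suppress it — precisely what the induction uses — while it may also let several robots move in one epoch, which only improves the bound.
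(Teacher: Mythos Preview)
Your cascading-induction argument is exactly the paper's approach, only more carefully stated; the paper's proof is three sentences asserting that once the chain is in Packed state, $r_1$ moves in one epoch, $r_2$ observes the vacancy and moves in the next, and so on down the chain for a total of at most $i$ epochs. The paper does not address the handshake obstacle you flag in your third paragraph; it implicitly treats this lemma as counting purely the physical relocations, with the colour-exchange cost deferred to Lemma~\ref{t2lem_app} (the $7i$ bound) --- so your resolution~(b) is the intended reading.

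Your resolution~(a), by contrast, is not viable and you should drop it: the $r_{k+1}\!\to\! r_{k+2}$ handshake cannot have completed before the Leader moves, because $r_{k+1}$ only acquires a \emph{Target} (and hence can begin \textsc{Communication} with $r_{k+2}$) after it has finished \textsc{Receive} from $r_k$, which in turn begins only after $r_{k-1}$ has finished with $r_k$, and so on back to the Leader. The handshakes are therefore serialized and interleaved with the moves, not pre-executed down the chain. Taken literally, then, the $i$-epoch claim for moves alone is loose for precisely the reason you suspected; the paper simply absorbs this into the $7i$ of Lemma~\ref{t2lem_app}, which is the figure Theorem~\ref{thm:indtime_app} actually uses, so the discrepancy is real but harmless for the final $O(m^2)$ bound.
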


\begin{proof}
    Once the chain is in the Packed state, the leader robot $r_1$ moves to its target vertex in one epoch and $r_1$ sets its color to WAIT. By the next epoch, $r_2$ observes that $r_1$ has left its previous vertex, so it moves to $v$. In a cascading manner all Successors move to their Predecessors location. Thus in the worst case, it takes $i$ epochs for all the robots in the chain to move.\qed
    
\end{proof}

\begin{lemma}\label{t2lem_app}
    Algorithm IND takes at most $7i$ epochs for the robots in a chain of length $i$ to reach the Packed state after movement. 
    
\end{lemma}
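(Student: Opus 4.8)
The plan is to follow the single linear chain of cause-and-effect that is set off the moment the Leader $r_1$ completes its two‑hop move and displays WAIT, and to bound the whole process by charging at most a constant number of epochs — namely $7$ — to each of the $i$ robots. The restoration consists of a \emph{downward wave}, in which each follower $r_j$ ($2\le j\le i$) shifts onto the vertex just vacated by its predecessor $r_{j-1}$ (after first communicating its own movement direction to $r_{j+1}$, when it has a successor), followed by an \emph{upward wave}, in which the CONF3 colour propagates from $r_i$ back through $r_{i-1},\ldots,r_2$ to $r_1$, at which point $r_1$ detects the Packed state. I would establish the bound on each wave separately and then compose them along the dependency chain $r_1\to r_2\to\cdots\to r_i\to r_{i-1}\to\cdots\to r_1$.

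First I would isolate the local protocol a follower $r_j$ runs during one restoration and read off its length from the pseudocode of \textsc{Communicate}, \textsc{Receive} and \textsc{Packed\_State}: while $r_{j-1}$ has not yet moved, $r_j$ simply holds CONF3; once $r_{j-1}$ shows its two DIR colours, $r_j$ acknowledges them via CONF $\to$ CONFC $\to$ CONF2 and sets its Target to the vacated vertex; if $r_j$ has a successor it then runs \textsc{Communicate} (on $r_j$'s side this is DIR $\to$ CONFC $\to$ DIR $\to$ MOV); it moves two hops; and finally, on seeing CONF3 on its successor (or ON on the robot freshly placed at the Door, in the case of $r_i$), it returns to CONF3 via \textsc{Packed\_State}. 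This is a fixed sequence of at most $7$ colour changes, and — this is the point that makes the asynchronous setting behave — each change is enabled by exactly one preceding action of $r_{j-1}$ or $r_{j+1}$, so under ASYNC that enabled action is taken within the next epoch. Hence $r_j$ finishes its part of the restoration within $7$ epochs of its predecessor having vacated, and symmetrically the CONF3 step of $r_j$ occurs within one epoch of that of $r_{j+1}$.

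Composing: $r_1$'s move enables $r_2$'s shift, $r_2$'s move enables $r_3$'s, and so on down to $r_i$, whose move puts a new robot at the Door and thereby enables $r_i$ to adopt CONF3; this colour then climbs back through $r_{i-1},\ldots,r_2$ and finally to $r_1$. Since each step in this chain of at most $i$ dependent steps completes within $7$ epochs of the previous one, and the count genuinely starts only at $r_2$ (nothing above $r_2$ is pending immediately after $r_1$'s move), the entire restoration takes at most $7i$ epochs. One can cross‑check the pure‑movement part of this with Lemma~\ref{t1lem_app} ($i$ epochs for the cascade of moves alone), the remaining budget being absorbed by the handshakes and the CONF3 wave.

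The step I expect to be the main obstacle is the bookkeeping in the middle paragraph: proving that the "at most $7$ epochs after the predecessor" charge is genuinely uniform, i.e., that no robot can ever be stalled waiting for a colour it will not see. Concretely one must verify that the guards in \textsc{Communicate}, \textsc{Receive} and \textsc{Packed\_State} are mutually exclusive on every colour configuration that a length-$i$ chain can actually reach during a restoration — for instance that $r_j$ cannot be expected to react to $r_{j-1}$ while mid‑handshake with $r_{j+1}$, and that the CONF3 wave cannot start before $r_i$ has actually moved and the new robot has appeared. This amounts to a case analysis over the reachable states of the chain, where the earlier structural guarantees (Lemma~\ref{lem2_app} and Corollary~\ref{corollary1_app}: no collisions, no self‑crossing) are exactly what keep the chain inside that well‑behaved family of configurations.
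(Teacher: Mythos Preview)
Your proposal is correct and takes essentially the same approach as the paper: both arguments charge a constant number of handshake steps (the paper enumerates the seven colours DIR, CONF, CONFC, CONFC, DIR, CONF2, CONF3 exchanged between each predecessor--successor pair) and multiply by the chain length $i$. Your decomposition into a downward movement wave followed by an upward CONF3 wave, and your emphasis on verifying the uniformity of the per-robot charge via a reachable-state case analysis, are refinements of the same skeleton rather than a different route; the paper's own proof is considerably terser and simply asserts the $7$-per-pair count without that extra bookkeeping.
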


\begin{proof}
    As per the algorithm description, the communication between a Predecessor $r_p$ and its Successor $r_q$ by showing a series of seven colors: first DIR color at $r_p$, CONF at $r_q$, CONFC at $r_p$, CONFC at $r_q$, second DIR color at $r_p$, CONF2 at $r_q$ and CONF3 at $r_q$. A leader can only move when the chain is in the Packed state. For a chain to be in the Packed state, the Successor of the leader robot has to show CONF3 color.
    The process of communication of colors starts from the new robot $r_{i+1}$ that appears at the Door. It takes at most seven epochs for the communication between $r_i$ and $r_{i+1}$. 
    Similar communication happens between all $i$ pairs of consecutive robots on the chain. So it takes at most $7i$ epochs for the chain to reach Packed state.\qed
\end{proof}

\begin{lemma}\label{t3lem_app}
    Algorithm IND takes at most 4 epochs to have a transfer of leadership from a leader to its Successor.
\end{lemma}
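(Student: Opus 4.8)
The plan is to trace the leadership-transfer handshake between the outgoing leader $r_1$ and its successor $r_2$ and charge one epoch to each of its constantly many stages. First I would record an invariant that keeps the accounting honest: by the rules for \emph{Transferring the leadership} and the \textsc{Leadership\_Transfer} subroutine, the transfer is initiated only while the chain is in the Packed state (the leader acts only when $r_1.Successor.Color=$ CONF3), and in this branch $r_1.Target$ is never set, so during the entire transfer no robot performs a MOVE and the chain remains in the Packed state. Consequently there is no re-packing cost hidden inside the transfer, and $r_2$ is ready to choose a target the instant it becomes leader.

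Then I would count epochs, using that an epoch activates every non-Finished robot at least once, so each one-way color signal between $r_1$ and $r_2$ is delivered within one epoch. The handshake I expect is: (i) one epoch for $r_1$, having found no admissible free vertex in $V_{r_1}^2$, to display the special DIR color pointing towards $r_2$; here $r_2$ can distinguish this from an ordinary target direction because it points back along the edge to the vertex $r_2$ itself occupies, a direction an ordinary target communication never uses, since by Corollary~\ref{corollary1_app} the chain never crosses itself; (ii) one epoch for $r_2$ to read that color and record that it is to take over; (iii) one epoch for $r_1$ to set $r_1.Color=$ OFF and enter the Finished state; (iv) one epoch for $r_2$, now observing $r_1.Color=$ OFF --- the precondition Lemma~\ref{lem1_app} requires for a successor to become a Leader --- to switch $r_2.State$ to Leader. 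Summing the four stages gives the claimed bound of $4$ epochs, independent of the chain length.

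The main obstacle is arguing that the ASYNC scheduler cannot stretch any single stage past one epoch and that the stages compose without gaps: in stages (iii)--(iv) the robot $r_1$ is already Finished and no longer participates, so only $r_2$'s activation is needed and still occurs within one epoch, while in stage (i) the special DIR color is precisely what prevents $r_2$ from reacting to an OFF color prematurely. If one prefers a variant in which $r_2$ displays an explicit acknowledgement before $r_1$ goes OFF, the same stage-by-stage count still totals $4$ epochs. Finally I would note that, because nothing moves during a transfer, this bound is additive with Lemmas~\ref{t1lem_app} and~\ref{t2lem_app} when the overall epoch bound is assembled later.
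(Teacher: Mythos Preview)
Your proposal is correct and follows essentially the same approach as the paper: both trace the leadership-transfer handshake and charge one epoch to each of four stages. The only cosmetic difference is that the paper makes the successor's explicit acknowledgement (setting $r_2.Color=$ CONF after reading the special DIR) the second stage, whereas you describe that as an optional variant; either way the count is the sequence DIR at $r_1$, acknowledgement/read at $r_2$, OFF at $r_1$, and state change at $r_2$, totalling four epochs.
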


\begin{proof}
    Now consider the situation where the Leader $r_1$ cannot find any free vertices two hops away.
    If $r_1$ is at the Door, it sets its color to OFF and switches to Finished state; the maximal independent set is filled.
    Otherwise, $r_1$ switches to the Finished state by setting the DIR towards its follower.
    $r_2$ sees the DIR color and sets $r_2.Color$ to CONF.
    $r_1$ sees the CONF color, it switches to color OFF.
    $r_2$ becomes the new Leader once it sees the color OFF at $r_1$. In total, this needs at most 4 epochs because of the sequence of colors (DIR, CONF and OFF) and final state change at $r_2$.\qed
\end{proof}
\begin{theorem}\label{thm:indtime_app}
    The algorithm IND runs in $O(m^{2})$ epochs.
\end{theorem}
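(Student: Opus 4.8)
The plan is to partition the whole execution of IND into \emph{phases}, charge each phase to a distinct robot arriving at the Door, and bound the cost of each phase using Lemmas~\ref{t1lem_app}, \ref{t2lem_app} and \ref{t3lem_app}. A phase consists of one movement of the current Leader together with the subsequent work of the chain until it is back in the Packed state; interleaved with these are the leadership transfers, which I treat as separate $O(1)$-cost events.

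First I would count the phases. Every movement of the Leader triggers, by the cascading rule, a movement of every robot in the chain, so the robot at the Door vacates its vertex and, by the model, exactly one new robot appears at the Door. Since at most $m$ robots ever enter the graph (admitting more would make the instance unsolvable, so IND never does), there are at most $m-1$ Leader movements in total over all Leaders. For leadership transfers, Lemma~\ref{lem1_app} gives that each robot becomes Leader at most once (after handing off leadership it is Finished forever), so there are at most $m$ transfers. The earlier remark that a vertex occupied in a Packed state stays occupied, together with Corollary~\ref{corollary1_app}, ensures the chain is always a simple path back to the Door, and hence its length never exceeds the number of robots currently inside the graph, i.e. at most $m$.

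Next I would bound the cost of a single phase. Suppose the Leader moves while the chain has length $i \le m$. By Lemma~\ref{t1lem_app}, the cascade in which every robot occupies its predecessor's former vertex (and a fresh robot appears at the Door) takes at most $i$ epochs. By Lemma~\ref{t2lem_app}, the chain then needs at most $7i$ epochs for the seven-color handshake to propagate from the new Door robot back toward the Leader and restore the Packed state, after which the Leader runs the constant-length handshake with its successor to announce its next target. Hence one phase costs $i + 7i + O(1) = O(i) = O(m)$ epochs, and by Lemma~\ref{t3lem_app} each leadership transfer costs at most $4$ epochs. Summing: the Leader-movement phases contribute at most $\sum_{j=1}^{m} O(j) = O(m^2)$ epochs (or, crudely, at most $m$ phases each of cost $O(m)$), the at most $m$ leadership transfers contribute $O(m)$ epochs, and by Theorem~\ref{thm1_app} the algorithm halts once the Door-Leader has no free target; therefore IND runs in $O(m^2)$ epochs.

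The main obstacle is the counting in the first step: showing that the total number of Leader movements and leadership transfers is $O(m)$ rather than unbounded. This rests on the structural facts that the chain is always a simple path to the Door whose length is at most the number of robots inside, that no robot is ever reused as Leader, and that each phase makes genuine progress (a new robot enters, or a robot finishes) with no deadlock — all of which we inherit from the correctness analysis (Lemmas~\ref{lem1_app}--\ref{lem3_app}, Theorem~\ref{thm1_app}) and the invariant that at most $m$ robots are admitted. Once that is in place, the per-phase bounds of Lemmas~\ref{t1lem_app}--\ref{t3lem_app} make the final summation routine.
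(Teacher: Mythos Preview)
Your proposal is correct and follows essentially the same approach as the paper: bound the number of Leader movements by $O(m)$ (the paper does this implicitly via the sum $\sum_{i=1}^m 7i + 4m$, you do it explicitly by charging each movement to the fresh robot it admits at the Door), bound the cost of each movement-plus-restoration by $O(i)\le O(m)$ using Lemmas~\ref{t1lem_app} and~\ref{t2lem_app}, and add $O(m)$ total epochs for the at most $m$ leadership transfers via Lemma~\ref{t3lem_app}. Your write-up is more explicit than the paper's in justifying why the number of phases is $O(m)$, but the decomposition and the arithmetic are the same.
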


\begin{proof}

    For an MIS of size $n$, we can have a chain of length at most $n$. 
    For each increase in a chain of length $i$, it takes at most $7i$ epochs from Lemma~\ref{t1lem_app} and \ref{t2lem_app}.
    Also, we can have at most $m$ leadership transfers.
    From Lemma~\ref{t3lem_app}, each transfer of leadership takes at most 4 epochs. 
    So in total we need, $\sum_{i=1}^m 7i + 4 m = O(m^2)$ epochs.
    Therefore, after at most $O(m^{2})$ epochs, an MIS of vertices of the graph becomes filled.\qed
\end{proof}

The corollary below follows from Corollary 1 in \cite{hideg2020asynchronous}.
\begin{corollary}
    (i) Assume that there are no inactive intervals between the LCM cycles and that every LCM cycle of every robot takes at most $t_{max}$ time.
    Then the running time of the IND algorithm is $O(m^{2}t_{max})$. (ii) The IND algorithm needs $O(m^{2})$ LCM cycles under an FSYNC scheduler.
\end{corollary}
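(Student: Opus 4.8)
The plan is to derive both parts from Theorem~\ref{thm:indtime_app}, which already gives an $O(m^2)$ bound on the number of epochs, by converting the epoch count into the two desired measures. Recall that an epoch is defined as the shortest time interval during which every robot not in the Finished state is activated at least once and completes an LCM cycle.

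For part (i), I would first bound the real-time duration of a single epoch under the stated hypotheses. Fix any point in real time; since there are no inactive intervals, every non-Finished robot is either mid-cycle or about to start a cycle. A robot that is mid-cycle finishes its current cycle within $t_{max}$, and then, again with no idle gap, starts and completes a further full cycle within another $t_{max}$; hence within any window of length $2\,t_{max}$ every participating robot has executed at least one complete LCM cycle. Therefore each epoch lasts at most $2\,t_{max}$ time. Combining this with the $O(m^2)$ epoch bound of Theorem~\ref{thm:indtime_app} yields a total running time of at most $2\,t_{max}\cdot O(m^2) = O(m^2 t_{max})$. (Finished robots are irrelevant here: by the Remark, an occupied vertex in the Packed state stays occupied, so a robot that has switched to Finished never moves again and does not influence the progress measure.)

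For part (ii), I would observe that under an FSYNC scheduler every robot is activated in every round and completes its LCM cycle within that round, so a single round already satisfies the definition of an epoch. Consequently the $O(m^2)$ epoch bound translates directly into $O(m^2)$ rounds, and since each robot performs exactly one LCM cycle per round, this is $O(m^2)$ LCM cycles, as claimed.

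The only delicate point is the constant factor in the epoch-to-time conversion in part~(i): one must be careful that a robot caught in the middle of an LCM cycle at the start of the window may need almost $t_{max}$ extra time before a \emph{fresh} cycle can begin, which is exactly why the factor $2$ (rather than $1$) appears; everything else is a routine substitution into the bound already established in Theorem~\ref{thm:indtime_app}. This mirrors Corollary~1 of~\cite{hideg2020asynchronous}, from which the statement is adapted.
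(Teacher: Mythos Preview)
The paper does not give its own proof of this corollary; it simply states that it follows from Corollary~1 in~\cite{hideg2020asynchronous}. Your argument is a correct spelling-out of exactly that deferred reasoning: convert the $O(m^2)$ epoch bound of Theorem~\ref{thm:indtime_app} to real time via the $2t_{max}$ per-epoch bound in~(i), and observe that under FSYNC one round is one epoch in~(ii). This matches the paper's intended approach.
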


\section{Algorithm for MIS Filling with Multiple Doors}\label{sec:multidoor_app}

The MULTIND is largely similar to algorithm IND with a few modifications.
It works under a \emph{Semi Synchronous} (SSYNC) scheduler, where a subset of robots is activated in each round, where each activated robot finishes its LCM cycle in the same round.  We define an epoch similarly as the minimum number of rounds where all robots are activated at least once. Note that an epoch can have a variable number of rounds.
The graph has a maximum of \emph{k} number of Doors.
The robots do not have any knowledge about the number of Doors. 
The visibility range of the robots is five hops.
Each robot has $O(\log (\Delta + k))$ bits of persistent memory, and $\Delta + k + 7$ colors where $\Delta$ number of DIR colors, CONF, CONFC, CONF2, CONF3, MOV, ON, OFF and $k$ number of WAIT colors denoted by WAIT-1, WAIT-2, $\ldots$, WAIT-$k$ representing that a robot is waiting as well as their rank.
All the robots entering from a particular Door can only display WAIT color corresponding to that Door.
The WAIT colors can be compared against each other to establish dominance between the robots. Initially a robot has color ON when it is placed at the Door.
The proposed MULTIND runs in O($m^{2}$) epochs.

\subsection{The MULTIND Algorithm}
The Leader robots need to avoid collision with other Leader robots and the follower robots in chains.
The robots make use of the hierarchy among the $k$ WAIT colors to avoid collision with another Leader robot.
The Leader robots also avoid cutting through a chain to avoid collision with follower robots in another chain.
In the multiple Door situation, the Leaders display the WAIT-$i$ color instead of the WAIT color used in a single Door case.
In the \emph{Look} phase, if a Leader robot $r_{i}$ with color WAIT-$i$ sees any other Leader $r_{j}$ with color WAIT-$j$ such that $j < i$, then we say that the Leader robot $r_{j}$ \emph{dominates} Leader robot $r_{i}$.
The Leader robot $r_{j}$ is said to be the \emph{dominating} and the Leader robot $r_{i}$ is said to be \emph{dominated}.
Note that a dominating Leader robot can be dominated by another Leader robot at the same time.
If a Leader robot is not dominated by any other robot, it can choose a target.

The rest of the model is the same as in the Single Door case.
The robots can be in any one these states during execution: \emph{None}, \emph{Leader}, \emph{Follower}, \emph{Finished}.
When the robots appear at the Door, they are initialized with None state and with color ON.
We define $\mathcal{P}_k(v_i, v_j)$ as the set of vertices that are part of all the paths from $v_i$ to $v_j$ of length $k$. Note that there can be multiple  paths of length $k$, and we include vertices of all those paths.
The additional rules are applicable to the Leader as follows. 

\subsubsection{Movement of a Leader robot:} If a chain is in Packed state, the corresponding leader $r_L$ chooses a free vertex in $V_{r_L}^2$ as target in one of the following way.
\begin{itemize}
    \item If $V_{r_L}^5$ has a dominating leader $r'_L$ of some other chain, then a free vertex $v \in V_{r_L}^2$ is chosen such that  $v \notin \mathcal{P}_5(r_L,r'_L)$. If no such vertex $v$ is found, then $r_L$ transfers the leader to its successor by pointing the direction towards the successor and goes into Finished state by changing its color to OFF.
    \item If $r_L$ is the leader dominating some other leader $r'_L$ in $V_{r_L}^5$, then a free vertex $v\in V_{r_L}^2$ is chosen such that every vertex $v' \in N_v^2 \cap V_{r_L}^2$ is either unoccupied or occupied with robot in Finished state. If no such vertex $v$ is present, $r_L$ transfer its leadership to its successor.
    \item If $V_{r_L}^5$ does not have any other leader robot, then a free vertex is chosen as target such that every vertex $v' \in N_v^2 \cap V_{r_L}^2$ is either unoccupied or occupied with robot in Finished state. If no such free vertex is there, $r_L$ transfers the leadership to its successor.
\end{itemize}

\begin{figure}
\centering
\begin{minipage}[b]{0.45\textwidth}
\centering
  \includegraphics[width=0.95\textwidth]{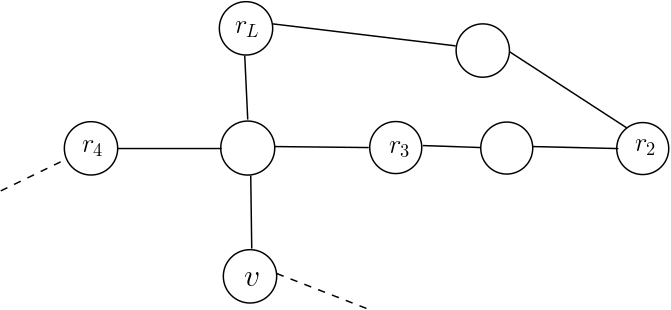}  
  \caption{A chain $r_L-r_2-r_3-r_4\cdots$ does not cross itself}
\label{selfcross_app}
\end{minipage}
\hspace{0.05cm}
\begin{minipage}[b]{0.45\textwidth}
\centering
  \includegraphics[width=0.95\textwidth]{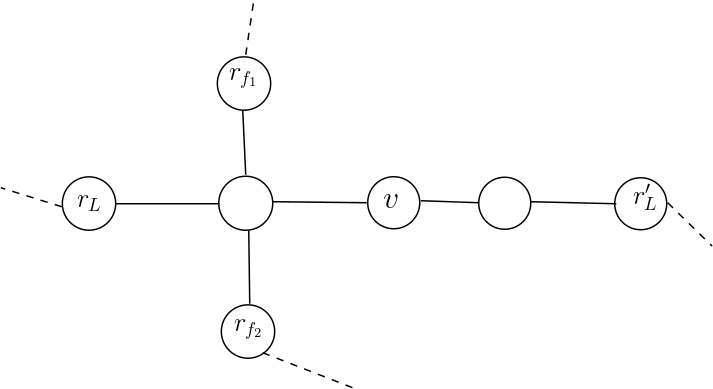}  
  \caption{Two chain do not cross each other }
  \label{crossing_app}
\end{minipage}

\end{figure}

After the target is fixed for a leader $r_L$, it communicates the target to its successor using DIR colors. After getting confirmation from its successor, it moves to the target with color MOV and waits for the chain to be in Packed state with respective WAIT color. Additionally, irrespective of whether a Leader robot is dominated or not, the target chosen is such that the Leader robot does not cut through a chain while moving to that target vertex, i.e., the vertex between the target vertex and the current vertex does not have more than one of its neighboring vertices occupied at least one of which is an active robot (a robot not displaying OFF color) other than the current Leader robot. As shown in Fig. \ref{crossing_app}, $r_L$ and $r'_L$ are the dominating and dominated Leader at an instance and $v$ is a free vertex which is 2 hops away from $r_L$. $r_{f_1}$ and $r_{f_2}$ are some followers of some other chain. $r_L$ does not reach to $v$ even if $r_L$ is the dominating Leader, as $r_{f_1}$ and $r_{f_2}$ is not in Finished state. As shown in Fig. \ref{selfcross_app}, a Leader $r_L$ sees a free vertex $v$ two hops away. But $r_L$ cannot reach to $v$ (self cross) as the two hops neighboring vertices are occupied by $r_3$ and $r_4$.

\subsubsection{Pseudocode of MULTIND:}
The pseudocode of the  MULTIND (Algorithm \ref{multind_app}) is given below. The subroutine \textsc{MULTIND\_FindTarget($r$)} is used for finding $r.Target$. All the  other subroutines such as  \textsc{Communication($r.Target$)}, \textsc{Receive($r.Predecessor.Target$)}, \textsc{Packed\_State($r,r.Successor$)} and \textsc{Leadership\_Transfer($r$)} remain same as IND algorithm.

\begin{algorithm2e}
	
	\If{$\exists$ a dominating leader $r' \in V_r^5$}
	{
		\If{$\exists$ a free vertex $v \in V_r^2$ such that $v \notin \mathcal{P}(r,r')$}
		{ 
			Set $v$ as $r.Target$
		}
		\Else
		{
			\textsc{Leadership\_Transfer}($r$) \algorithmiccomment{(No such free vertex to move to)}
		}
	}
	\Else
	{
		\If{$\exists$ a free vertex $v \in V_r^2$ such that all $v' \in N_v^2 \cap V_r^2$ is either unoccupied or occupied with robots in Finished state}
		{
			Set $v$ as $r.Target$
		}
		\Else 
		{
			\textsc{Leadership\_Transfer}($r$)
		}
	}
	\caption{\textsc{MULTIND\_FindTarget}($r$)}
	\label{findtarget_app}
\end{algorithm2e}
\begin{algorithm2e} %%%MULTIND
\If {$r.State$ is Leader}
	{
		\If {$r.Target$ is not set}
		{
			\If {$r.Entry$ is set and $r.Successor.Color=$ CONF3}
			{
				\textsc{MULTIND\_FindTarget}($r$) \& \textsc{Communicate}($r.Target$)\\
			
				$r$ clears $r.Target$ by clearing $r.Target.One$ and $r.Target.Two$\\
				Set $r.Color=$ WAIT-i \algorithmiccomment{(Wait color corresponding to Door-i)}\\
				\textsc{Packed\_State}($r$, $r.Successor$)\\
				
			}
		}
	}

\If{$r.State$ is Follower}
	{
		\If {$r.NextTarget.One$ is not set}
		{
			\textsc{Receive}($r.Predecessor.Target$)
		}
		\Else
		{
			\If{$r.Target$ is set}
			{
				\textsc{Communicate}($r.Target$)\\
				$r$ clears $r.Target$ by clearing $r.Target.One$ and $r.Target.Two$\\
				\textsc{Packed\_State}($r$, $r.Successor$)\\
			}
		}
	}

\If{$r.State$ is None \& $r.Color=$ON}
{
		\If {$r$ does not find any robot less than a distance of 2 hops}
		{ 
			Change $r.State$ to Leader\\
		   Set $r.Color=$ Wait-i\algorithmiccomment{(Wait color corresponding to the Door-i)}\\
			\textsc{MULTIND\_FindTarget}($r$) \& Set direction of movement as $r.Entry$\\
		 Set $r.Color=$ MOV \&
		Move to $r.Target$\\
		 Set $r.Color=$ WAIT-i\\
		}
		\Else
		{
			 $r$ chooses the nearest robot as $r.Predecessor$\\
			Change $r.State$ to Follower\\
			\If{$r.Predecessor.Color=$ WAIT-i \& $r.Color=$ON}
			{
				Set $r.Color=$ CONF3
			}		
	}
}
		\caption{MULTIND}
		\label{multind_app}
	\end{algorithm2e}

\subsection{Example of MULTIND} We use an example to illustrate the execution of the MULTIND algorithm.
Consider the graph with respective Door positions in Fig.~\ref{fig:multindflow1_app}.
Door 1 ranks higher in the hierarchy than Door 2, which ranks higher than Door 3.
Assume that all the robots are activated in each epoch.
The positions of the robots in the next epoch are shown in Fig.~\ref{fig:multindflow2_app}.
Each Leader robot makes a move.
Observe that Leader-1 and Leader-2 are within the visibility range of each other.
Similarly, Leader-2 and Leader-3 are within the visibility range.
Leader-2 is dominated by Leader-1 and hence, its target vertex should not lie on the path to Leader-1.
A similar argument applies for Leader-3 which is dominated by Leader-2.
Fig.~\ref{fig:multindflow3_app} shows the positions of the robots after the next epoch.
All the Leaders are now stuck and have to transfer their Leadership.
The result of that is shown in Fig.~\ref{fig:multindflow4_app}.
Leader-2 has no possible target vertices and has to transfer its leadership again.
Leader-3 is dominated by Leader Leader-2.  In the next epoch shown in Fig.~\ref{fig:multindflow5_app}, Leader-1 and Leader-3 make a move.
Leader-2 is at its Door.
In the next epoch, Leader-2 goes into \emph{Finished} state.
Similarly, in the next epoch, Leader-1, and Leader-3 transfer their leaderships which is shown in Fig.~\ref{fig:multindflow6_app}.
Leader-3 makes one final move in Fig.~\ref{fig:multindflow7_app}.
%%%%%%%%%%%%%%%%
\begin{figure}
    \label{multindfig_app}
	\centering
	\begin{subfigure}{.45\textwidth}
	\centering
	\includegraphics[scale=0.07]{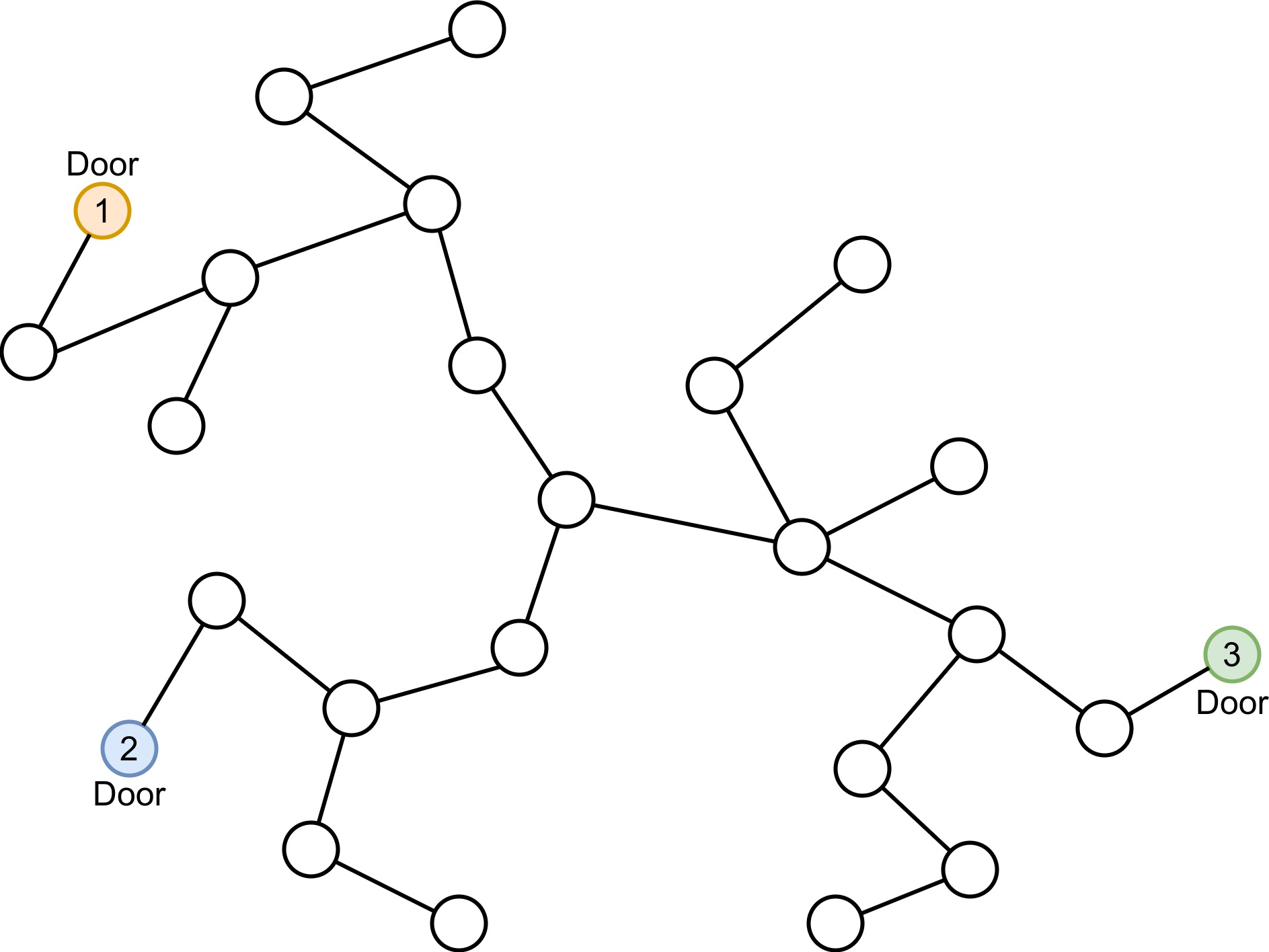}
	\caption{Location of the Doors and initial setting.}
	\label{fig:multindflow1_app}
	\end{subfigure}\hfill
	\begin{subfigure}{.45\textwidth}
		\centering
		\includegraphics[scale=0.07]{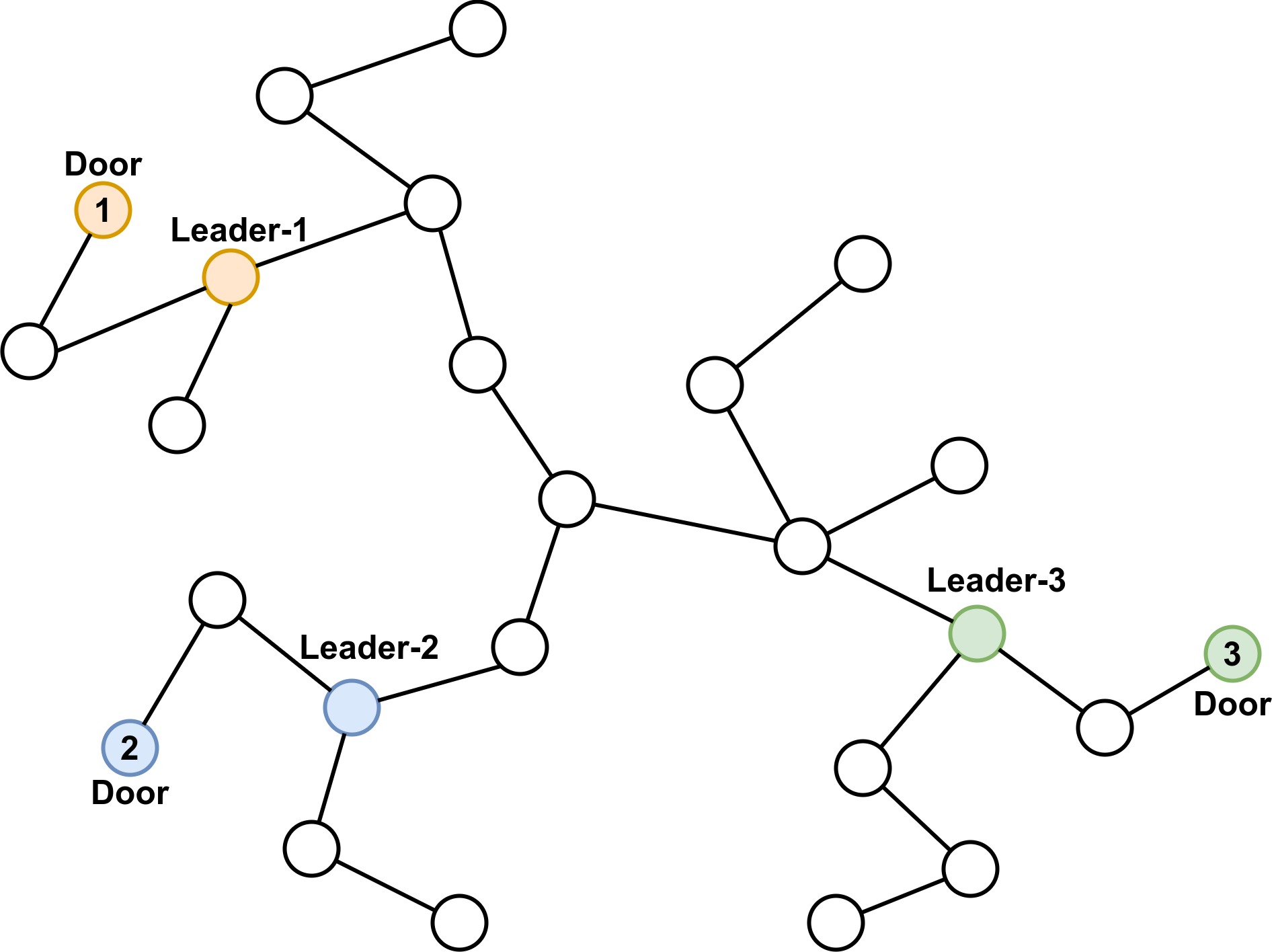}
		\caption{All the Leaders make a move.}
		\label{fig:multindflow2_app}
	\end{subfigure}~
	
	\begin{subfigure}{.45\textwidth}
		\centering
		\includegraphics[scale=0.07]{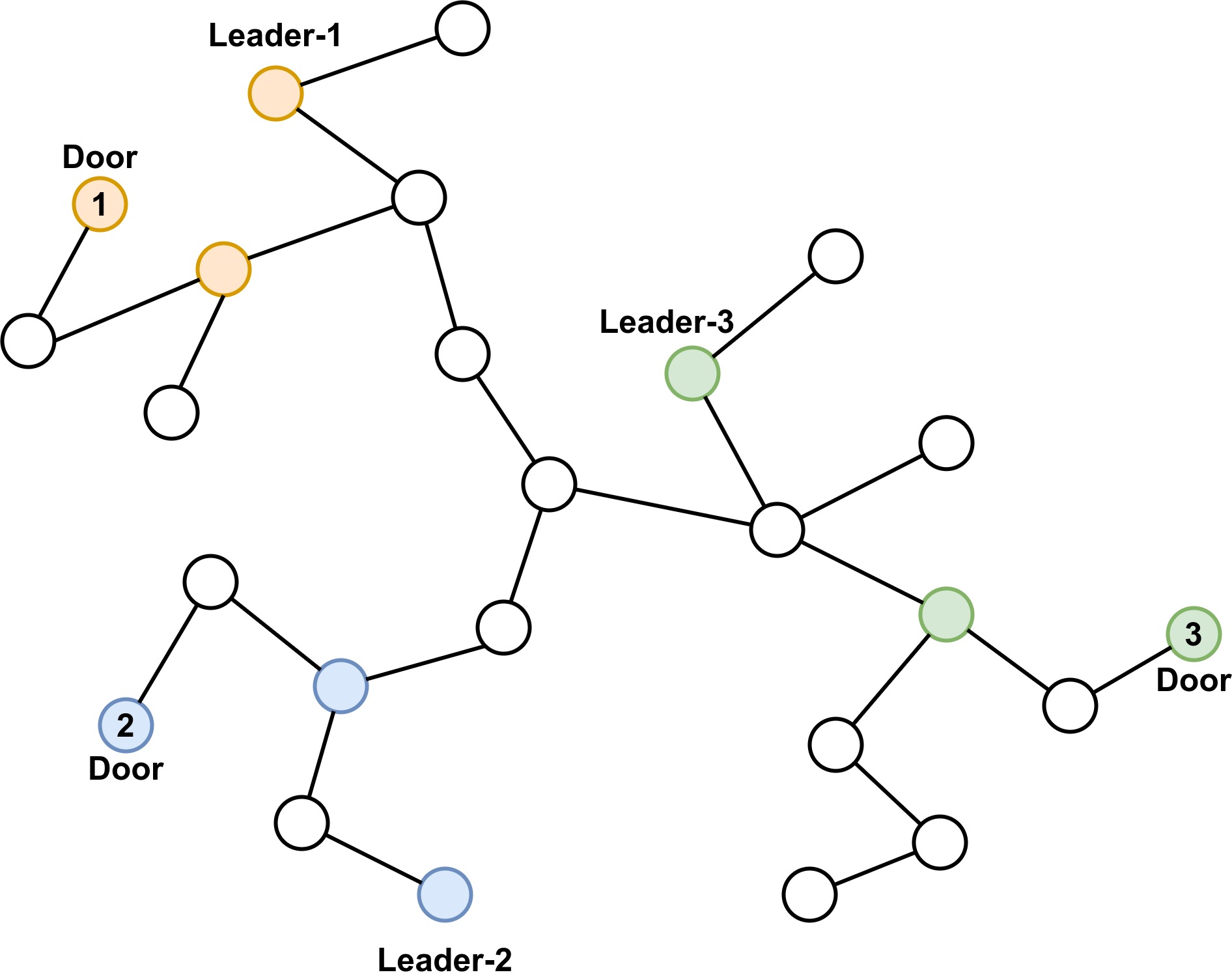}
		\caption{Leader-2 and Leader-3 avoid moving toward their dominating robot.}
		\label{fig:multindflow3_app}
	\end{subfigure}\hfill
	\begin{subfigure}{.450\textwidth}
			\centering
		\includegraphics[scale=0.07]{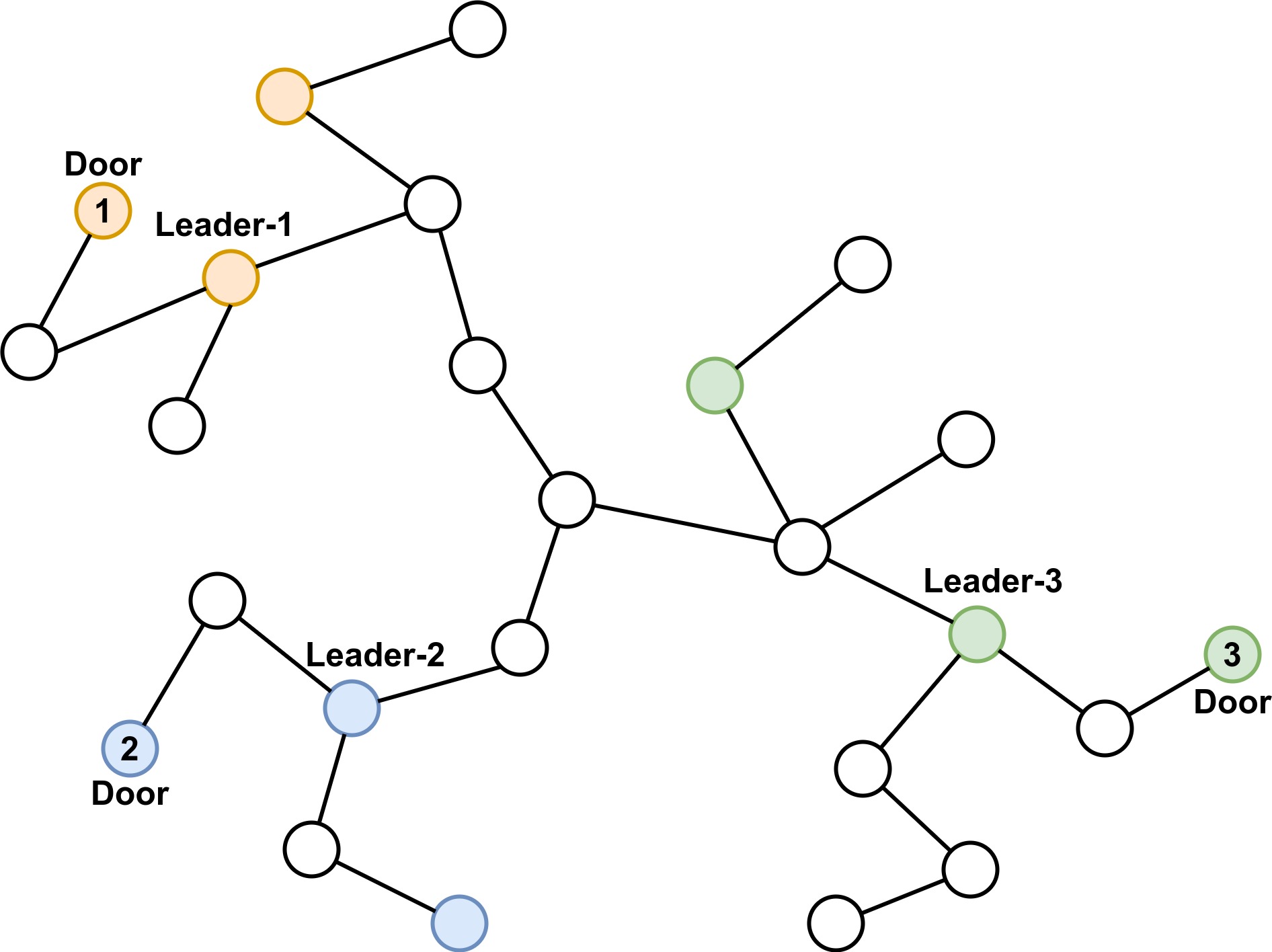}
		\caption{Leadership transfer occurs.}
		\label{fig:multindflow4_app}
	\end{subfigure}
	
	\noindent
	\begin{subfigure}{.45\textwidth}
			\centering
		\includegraphics[scale=0.07]{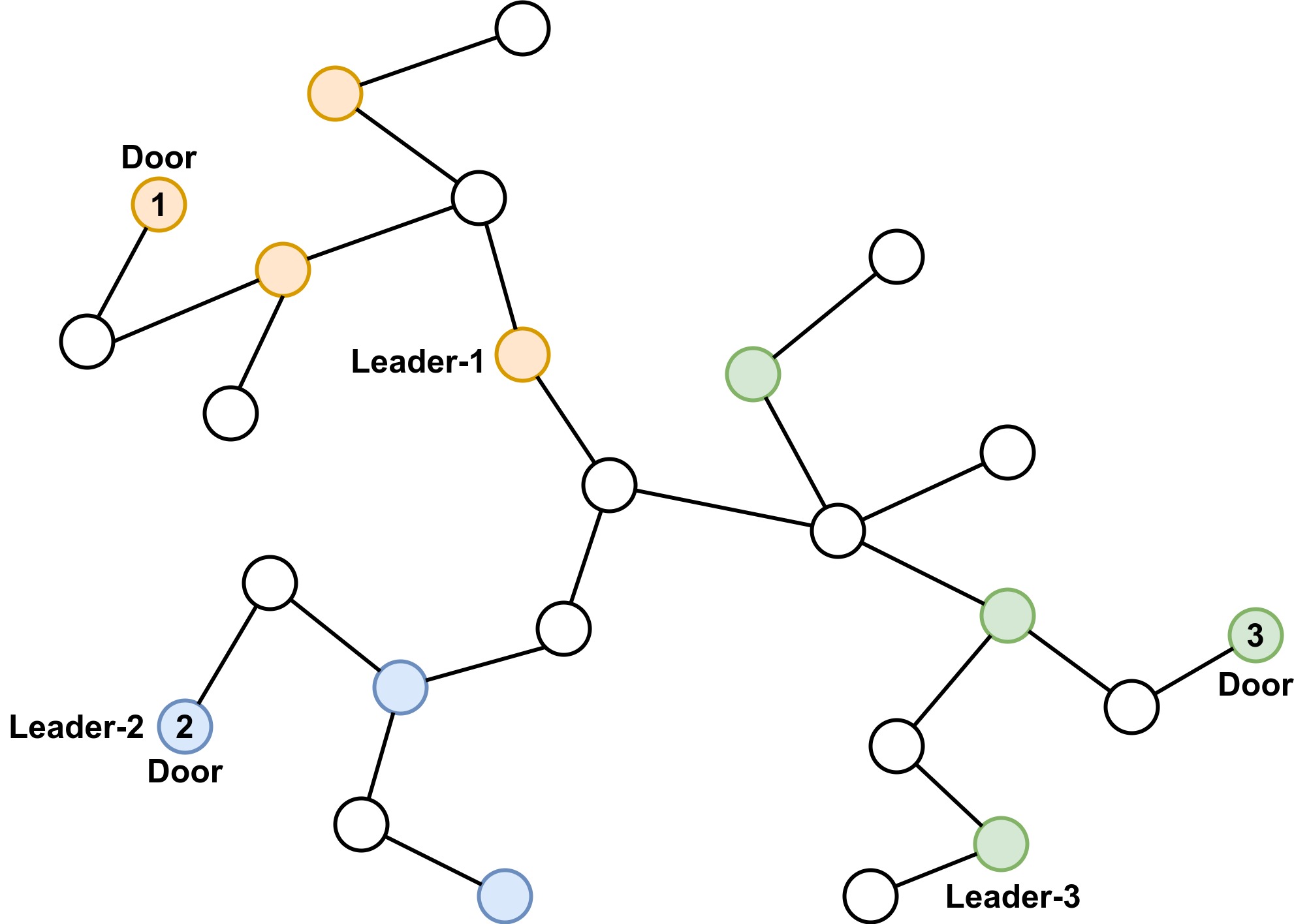}
		\caption{Leader-1 and Leader-3 make a move.
Leader-3 avoids moving toward Leader-1.}
		\label{fig:multindflow5_app}
	\end{subfigure}\hfill
	\begin{subfigure}{.45\textwidth}
		\centering
		\includegraphics[scale=0.07]{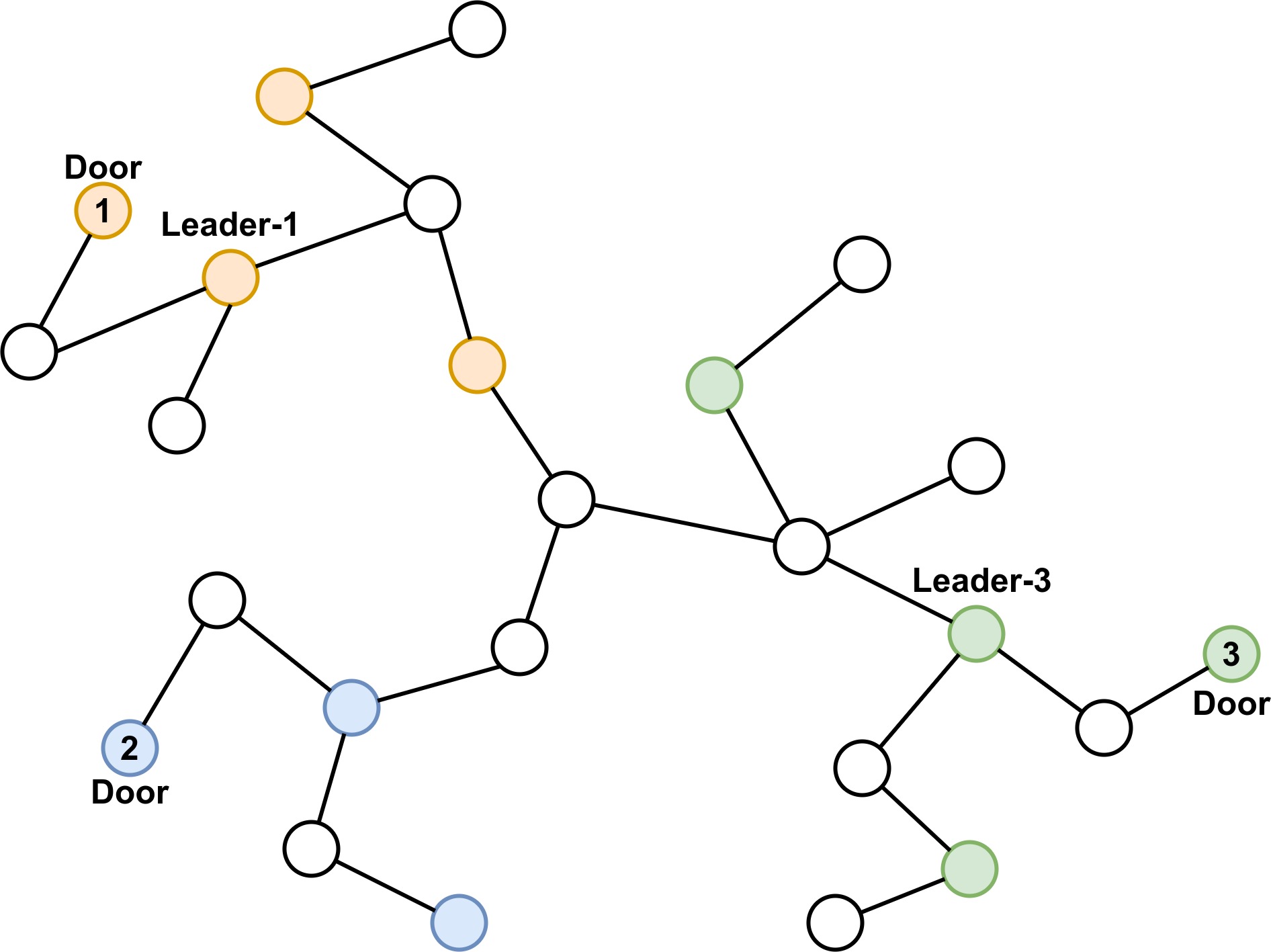}
		\caption{Leadership transfer occurs.}
		\label{fig:multindflow6_app}
	\end{subfigure}

	\begin{subfigure}{.45\textwidth}
		\centering
		\includegraphics[scale=0.07]{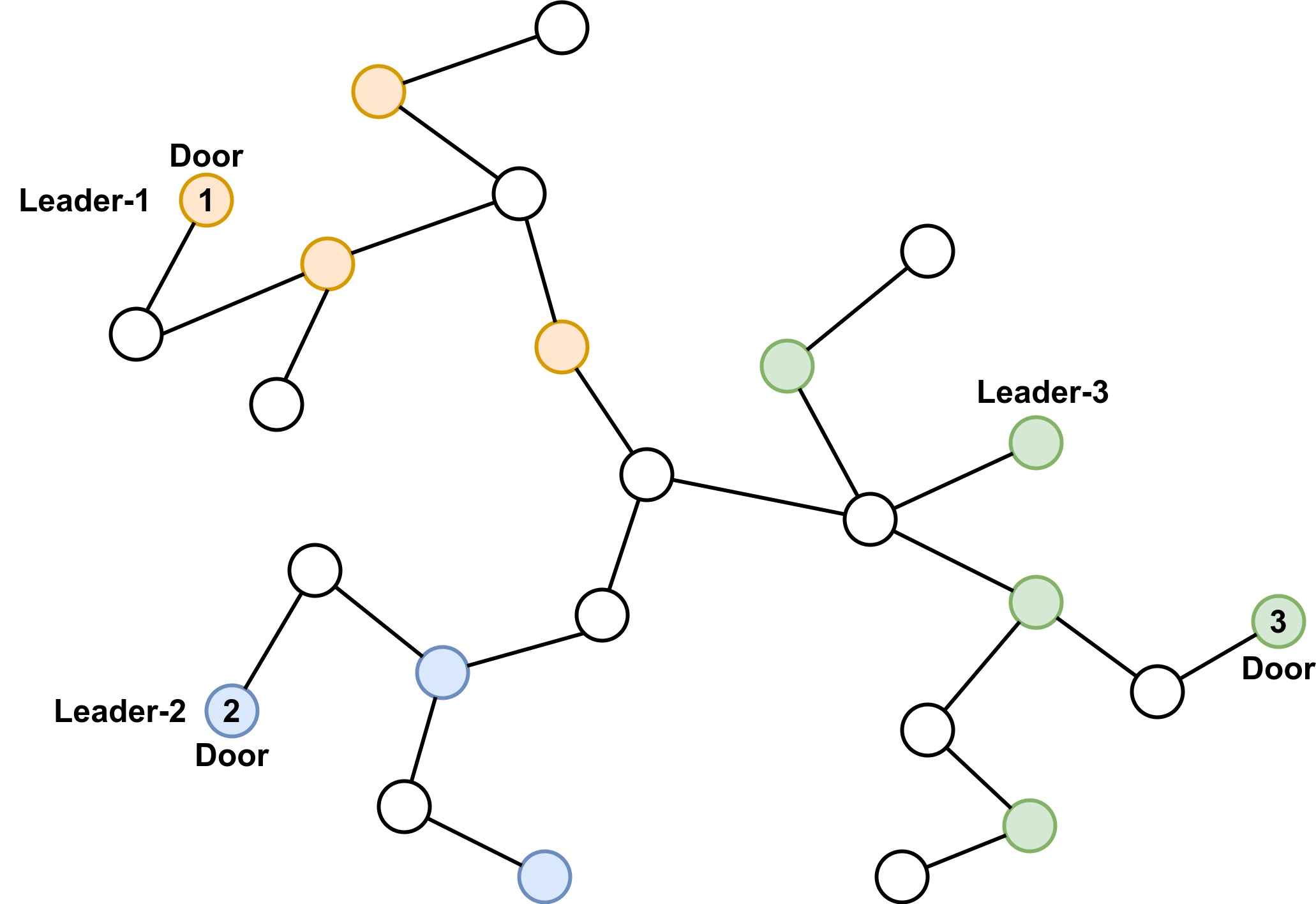}
		\caption{Leader-3 makes one final move.}
		\label{fig:multindflow7_app}
	\end{subfigure}
	\caption{An example execution of the MULTIND algorithm}
\end{figure}

\subsection{Analysis of MULTIND Algorithm}

\begin{lemma}\label{lem4_app}
    A Leader robot always moves to and occupies free vertices.
\end{lemma}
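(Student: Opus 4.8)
The plan is to argue by inspection of the three cases in the \emph{Movement of a Leader robot} rule together with the guarantee, carried over from the single-Door analysis, that a Leader only moves when its chain is in the Packed state. First I would recall (as in Lemma~\ref{lem1_app} and Corollary~\ref{corollary1_app}) that with a visibility range of $5 \geq 3$ the Leader $r_L$ can inspect every vertex within $2$ hops of any candidate target $v \in V_{r_L}^2$, hence it can verify directly that $v$ is free at the moment of selection. So the only thing that can go wrong is that some \emph{other} robot --- a Leader of another chain or a follower of another chain --- moves into $v$ or into a neighbour of $v$ between the time $r_L$ fixes $v$ and the time $r_L$ actually arrives there (recall the move is non-instantaneous and, more importantly, $r_L$ sits in the WAIT-$i$ state while its chain re-packs, which takes several rounds). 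The body of the proof is to rule this out in each of the three target-selection cases.

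The key steps, in order. (1) \emph{Within one chain}: by the Packed-state discipline (same as IND), no follower of $r_L$'s own chain and no later robot from $r_L$'s own Door can occupy $v$ or a neighbour of $v$ while $r_L$ holds WAIT-$i$, since followers only advance to a Predecessor's vacated vertex and the chain is behind $r_L$; this is exactly the single-Door argument. (2) \emph{Another Leader $r'_L$ that $r_L$ dominates} (second bullet): here $r_L$ picks $v$ with every $v' \in N_v^2 \cap V_{r_L}^2$ unoccupied or Finished; I would show that because $r'_L$ is dominated it sees $r_L$'s WAIT-$i$ color with $i$ smaller than its own, so by the dominating/dominated rule $r'_L$ may not choose a target and its chain is frozen; also $r'_L$'s own followers are frozen for the same reason, so no robot of that chain can intrude on $v$ or $N_v^2(v)$ while $r_L$ moves. (3) \emph{Another Leader $r'_L$ that dominates $r_L$} (first bullet): now $r_L$ avoids $\mathcal{P}_5(r_L,r'_L)$; I would argue that $r'_L$, being at distance $\leq 5$, will move only along a shortest path realizing that distance and its chain follows inside that path-set, so excluding $\mathcal{P}_5(r_L,r'_L)$ (of which $r_L$ sees everything, since it lies within $5$ hops) keeps $v$ and its relevant neighbours clear of $r'_L$ and its followers; for any \emph{third} Leader not in $V_{r_L}^5$, distance alone (more than $5$ hops, and a move is only $2$ hops while $r_L$'s hop to $v$ is $2$) prevents it from reaching $v$ or $N_v^2(v)$ before $r_L$ gets there. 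Combining (1)--(3): whenever $r_L$ commits to a target $v$, $v$ is free at commit time and stays free (with free neighbourhood, in the dominating/no-other-leader cases) until $r_L$ occupies it; and if no admissible $v$ exists, $r_L$ transfers leadership instead of moving, so it never moves to a non-free vertex. Hence a Leader robot always moves to and occupies free vertices. \qed

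The main obstacle I expect is step (3), and more precisely the interaction between the $5$-hop visibility bound, the $2$-hop move length, and the SSYNC scheduler: I need that in the time window between $r_L$ committing to $v$ and arriving there, no robot whatsoever --- including a dominating Leader's fast-moving followers, or a chain that only partially lies in $V_{r_L}^5$ --- can sneak onto $v$ or one of its neighbours. The delicate quantitative claim is that "$5$ hops of visibility $-$ $2$ hops to reach $v$" leaves enough slack that every robot capable of reaching $v$'s $2$-hop neighbourhood was visible to $r_L$ at commit time and is provably frozen (dominated) or provably too slow; one has to be careful that $\mathcal{P}_5(r_L,r'_L)$ really does contain every vertex a dominating chain can legally occupy near $r_L$, and that the "no cutting through a chain" side-condition (Fig.~\ref{crossing_app}) handles the follower robots that the domination argument alone does not freeze. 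I would isolate that window analysis as the technical heart and present (1) and (2) as essentially immediate consequences of the single-Door lemmas.
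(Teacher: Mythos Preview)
Your step (2) rests on a claim that is not supported by the algorithm: you assert that a dominated Leader $r'_L$ ``may not choose a target and its chain is frozen''. In MULTIND this is false. The first bullet of the \emph{Movement of a Leader robot} rule (and the pseudocode \textsc{MULTIND\_FindTarget}) explicitly lets a dominated Leader select a target $v \in V_{r'_L}^2$ provided $v \notin \mathcal{P}_5(r_L,r'_L)$; it only transfers leadership if \emph{no} such off-path vertex exists. So $r'_L$ can move while $r_L$ is committing to and travelling toward its own target, and $r'_L$'s followers are certainly not frozen by domination --- followers never inspect WAIT colors of foreign chains. The same misreading infects step (3): you assume the dominating Leader $r'_L$ ``will move only along a shortest path realizing that distance'' and that its chain ``follows inside that path-set'', but bullet~2 places no path constraint on a dominating Leader at all; it may pick any free vertex satisfying the $N_v^2 \cap V_{r_L}^2$ condition, on or off $\mathcal{P}_5$. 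Your window analysis therefore does not close.

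The paper's own proof is much shorter and argues differently. It does not attempt to freeze anyone or to bound a commit-to-arrival window. Instead it observes that (i) with $5$-hop visibility a Leader can verify at selection time that no non-Leader occupies a neighbour of its target, and (ii) the domination rule is \emph{symmetric} in the sense that whichever of two mutually visible Leaders is dominated is the one forced off $\mathcal{P}_5$, so their two targets cannot be adjacent. The argument is phrased from the dominating Leader's side: since the dominated one stays off $\mathcal{P}_5(r_L,r'_L)$, the dominating one is free to take a target there without risking adjacency. Under SSYNC this selection-time reasoning suffices. If you want to repair your approach, replace the ``frozen'' claims by this path-avoidance argument: what keeps the two Leaders apart is not that one of them waits, but that the dominated one is steered away from every vertex on any length-$5$ path between them.
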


\begin{proof}
Since the robots have a visibility range of 5 hops, a Leader robot can choose a target vertex such that none of its neighboring vertices are occupied by a robot that is not a Leader. Also two leaders do not occupy adjacent vertices.
    For a leader $r_L$, if there is another Leader $r'_L$ within $5$ hops visibility of $r_L$, one of them dominates the other. Let $r_L$ be the dominating robot. MULTIND algorithm ensures that $r'_L$ does not choose a vertex as its target in the path where it is being dominated. So, either $r'_L$ finds a target on some other path where is not getting dominated or transfers the leadership (in case of no such target is there). So, the dominating Leader $r_L$ can choose a free vertex as its target from the path $\mathcal{P}_5(r_L,r'_L)$. Hence, the robots cannot choose target vertices such that they are adjacent. \qed
\end{proof}

\begin{lemma}\label{lem5_app}
    The Robots do not collide.
\end{lemma}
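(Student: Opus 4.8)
The plan is to follow the structure of the proof of Lemma~\ref{lem2_app} and rule out collisions in two regimes: between two robots of the same chain, and between robots belonging to two different chains. For the intra-chain case the single-Door argument carries over unchanged. Each follower fixes a unique predecessor (the nearest robot) at the instant it appears at a Door; a predecessor always completes the \textsc{Communicate} handshake and moves its two hops before its successor copies the vacated vertex into its own \emph{Target}; and the Leader moves only from a Packed state to a free vertex $v$ with every $v' \in N_v^2 \cap V_{r_L}^2$ empty or occupied by a Finished robot. This last condition is exactly the hypothesis of Corollary~\ref{corollary1_app}, so a chain never self-crosses (Fig.~\ref{selfcross_app}), and the robots of one chain simply trail one another along a single path without coinciding.

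For collisions between two chains I would first dispose of the Leader--Leader case using Lemma~\ref{lem4_app}: a Leader always occupies a free vertex and two Leaders are never adjacent, so they are certainly never co-located. The mechanism is the WAIT-$i$ hierarchy --- if the two Leaders are within the $5$-hop visibility range, the dominated one is barred from choosing a target on $\mathcal{P}_5(r_L,r'_L)$ while the dominating one may safely take a free target on that path; if they are at distance at least $6$, no single pair of two-hop moves can bring them onto a common vertex. Since the scheduler is SSYNC, each activated robot finishes its whole LCM cycle within the round, so there is no partially executed move during which a collision could arise.

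It then remains to treat a Leader against a follower of another chain. When the Leader fixes a two-hop target, \textsc{MULTIND\_FindTarget} forces the target $v$ to be free and, by the ``does not cut through a chain'' rule (Fig.~\ref{crossing_app}), forces the unique intermediate vertex of the Leader's path to have at most one occupied neighbour that is an active (non-OFF) robot, namely the Leader itself; hence the Leader neither passes through nor lands on a vertex held by another chain's robot. Conversely, such a robot can only move into the vertex its own predecessor has just vacated, and that vertex, being occupied at the moment the Leader looked, is excluded from the Leader's free-vertex choice. Combining the three arguments, no two robots ever target the same vertex in the same round, so the robots do not collide. The step I expect to require the most care is this last inter-chain coordination under SSYNC --- in particular arguing cleanly that a free target chosen by one Leader cannot be simultaneously chosen or entered by a robot of another chain --- which is why the argument must lean on the domination hierarchy for nearby chains, the distance bound for far chains, and the fact that followers only ever enter freshly vacated vertices.
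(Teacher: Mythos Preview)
Your proof is correct and follows essentially the same three-case decomposition as the paper's own argument: intra-chain collisions are handled by the single-Door reasoning of Lemma~\ref{lem2_app}, Leader--Leader collisions are ruled out via the WAIT-$i$ domination hierarchy and the $\mathcal{P}_5$ exclusion, and Leader--follower collisions across chains are prevented by the ``no cutting through a chain'' check on $N_v^2 \cap V_{r_L}^2$. If anything you are more careful than the paper, explicitly treating the distance-$\ge 6$ case and the atomicity of LCM cycles under SSYNC, which the paper leaves implicit.
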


\begin{proof}\label{lem5proof_app}
    The proof for collision avoidance within a chain is  same as Lemma \ref{lem2_app} for a single Door case.
    In the \emph{Look} phase of the robots, if a Leader robot encounters another Leader robot within its visibility range, one of the robots gets dominated and will choose a target such that it does not lie on any path (of maximum length of 5 hops) connecting the two Leader robots. Thus, any two Leader robots avoid collision with each other.
    A Leader robot avoids collision with robots in another chain by avoiding cutting through a chain. When a leader robot finds a free vertex $v$, it checks every vertices in  $N_v^2 \cap V_{r_L}^2$. If at least one of the vertex in $N_v^2 \cap V_{r_L}^2$ is occupied by a robot not in Finished state, the Leader does not move. This prevents the crossover of two chains. Hence, a leader cannot collide with a follower robot from another chain. \qed
\end{proof}

Consequently, we can state following corollary.
\begin{corollary}
    A chain does not cross itself. Moreover, two chains do not cross each other. 
\end{corollary}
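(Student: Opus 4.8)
The plan is to derive both claims directly from Lemma~\ref{lem4_app} and Lemma~\ref{lem5_app} together with the explicit target-selection rule in \textsc{MULTIND\_FindTarget}. First I would fix notation: a chain \emph{crosses itself} if at some point during execution a follower (or the Leader) of that chain occupies a vertex adjacent to a vertex already occupied by another robot of the same chain that is not its immediate Predecessor/Successor, and \emph{two chains cross} if a robot of one chain moves through or onto a vertex adjacent to an active robot of another chain. Since, by the \emph{remark} preceding the theorem (carried over from the single-Door analysis), a vertex occupied in the Packed state stays occupied, it suffices to show that no newly chosen target (and no intermediate vertex of the two-hop move) ever creates such an adjacency.

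For self-crossing, I would argue as in the proof of Lemma~\ref{lem1_app} / Corollary~\ref{corollary1_app}: whenever a Leader $r_L$ picks a target $v\in V_{r_L}^2$, the rule requires every $v'\in N_v^2\cap V_{r_L}^2$ to be unoccupied or hold a Finished robot; in particular the unique intermediate vertex $w$ on the chosen two-hop path has no active non-Leader neighbour other than $r_L$ itself. Because the followers merely shift forward onto the vacated positions of their Predecessors (Lemma~\ref{lem5_app}, within-chain argument), the whole chain retraces a path the Leader has already cleared, so no follower is ever placed adjacent to a non-consecutive chain-mate — this is exactly the situation ruled out in Fig.~\ref{selfcross_app}. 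For crossing between two chains, I would split on the dominance relation supplied by Lemma~\ref{lem4_app}: if $r'_L$ is dominated by $r_L$ within five hops, the algorithm forbids $r'_L$ from choosing any target on $\mathcal{P}_5(r_L,r'_L)$, so $r'_L$'s move cannot bring it (or, by the cut-through restriction, its intermediate vertex) adjacent to $r_L$'s chain; and the dominating Leader, or a Leader with no other Leader in sight, is constrained by the $N_v^2\cap V_{r_L}^2$ condition, which in a five-hop visibility window also excludes adjacency to the followers of the other chain (Fig.~\ref{crossing_app}). Since followers only advance along already-cleared paths, it is enough that Leaders never initiate a crossing, and the two cases above cover every Leader move.

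The main obstacle I anticipate is the \emph{intermediate-vertex} bookkeeping and the interaction with the asynchrony-free-but-SSYNC scheduler: a two-hop Leader move passes through a vertex $w$ that the neighbourhood condition must also protect, and one must be careful that between the moment a Leader fixes its target (displaying WAIT-$i$) and the moment it actually moves, no other Leader has silently occupied a conflicting vertex. I would close this gap by invoking the WAIT-color hierarchy: a dominated Leader cannot move while it sees a dominating WAIT color, so the relative positions that determined $\mathcal{P}_5(r_L,r'_L)$ cannot change underneath $r'_L$ in a way that invalidates its choice; and a newly arriving robot near two chains becomes a Follower of the nearest robot (Lemma~\ref{lem5_app}) rather than an independent mover. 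Assembling these observations yields that no move of any robot ever produces a forbidden adjacency, which is precisely the statement that no chain crosses itself and no two chains cross. \qed
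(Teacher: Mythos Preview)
Your proposal is correct and follows essentially the same approach the paper takes: in the paper this corollary is stated as an immediate consequence of the proof of Lemma~\ref{lem5_app} (which in turn invokes the single-Door argument of Lemma~\ref{lem2_app}/Corollary~\ref{corollary1_app} for self-crossing and the $N_v^2\cap V_{r_L}^2$ and dominance conditions for inter-chain crossing), with no separate proof given. Your write-up simply makes explicit the reasoning that the paper leaves implicit in the preceding lemma; the extra care about the intermediate vertex and the WAIT-color timing under SSYNC is sound but goes slightly beyond what the paper spells out.
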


\begin{lemma}\label{lem6_app}
    No two robots in the Finished state occupy adjacent vertices.
\end{lemma}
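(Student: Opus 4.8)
The plan is to follow the single-Door argument of Lemma~\ref{lem3_app}, upgraded to handle several chains running concurrently. The two structural facts I would start from are: (i) a robot reaches the Finished state only out of the Leader state, via \textsc{Leadership\_Transfer}, and once Finished it displays OFF and never moves again, so the vertex it occupies stays occupied forever; and (ii) by Lemma~\ref{lem4_app}, every time a Leader moves onto a vertex that vertex is \emph{free}, i.e.\ none of its neighbours is occupied by any robot. Combining (i) and (ii) with Definition~\ref{packedstate_def_app} — every alternate occupied vertex of a chain was at some point visited by that chain's Leader — gives the invariant I want: the vertex on which a Finished robot finally rests was free at the moment its chain's Leader first occupied it, whether that robot reached the vertex directly as a Leader or arrived there as a Follower and only later inherited leadership at it through \textsc{Leadership\_Transfer}.

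From there I would argue by contradiction. Suppose two Finished robots $r$ and $r'$ sit on adjacent vertices $v$ and $v'$. Time-order the two events ``a Leader first occupies $v$'' and ``a Leader first occupies $v'$'', and assume the former happens no later than the latter. The robot that ultimately makes $v'$ permanently occupied does so as the Leader of its branch, having chosen $v'$ as its target while that branch is in the Packed state; at that instant Lemma~\ref{lem4_app} says $v'$ is free, so its neighbour $v$ must be unoccupied then. But $v$ is occupied from the moment it is first occupied onward, \emph{except} during the brief follower cascades inside $v$'s branch. When $v$ happens to be a cascade gap, the fact that a chain does not cross itself (the corollary following Lemma~\ref{lem5_app}) guarantees that its only candidate occupied neighbours would be skeleton vertices of its own chain two hops away along the path, and these are ruled out precisely because the chain is non-self-crossing; an occupied neighbour coming from another chain is excluded by the \textsc{MULTIND\_FindTarget} requirement that every $v''\in N_v^2\cap V_{r_L}^2$ be unoccupied or Finished, together with the domination rule. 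Hence in every case $v'$ would be adjacent to an occupied vertex when its Leader picks it as target — a contradiction.

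The only remaining case is a Leader that becomes Finished on a Door's buffer vertex $d_i'$, i.e.\ the configuration where the whole branch has collapsed back to the Door. Here I would note that $d_i'$'s graph-neighbour $v_i$ was free when the branch's first Leader occupied the buffer path, and $v_i$ cannot become occupied afterwards, since that would force some Leader to choose a target adjacent to the now-permanently-occupied $d_i'$, again contradicting Lemma~\ref{lem4_app}. Putting the cases together, the set of vertices occupied by Finished robots is independent, so no two Finished robots are adjacent. The main obstacle I anticipate is the cascade-gap bookkeeping of the second paragraph: turning the per-move ``the chosen target is free'' guarantee into a clean global independence statement in the presence of several branches built at different times, and of vertices that are momentarily empty while followers shift forward — this is where the non-self-crossing and no-cut invariants have to be invoked carefully, and it is the part I would write out in the most detail.
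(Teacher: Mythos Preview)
Your proposal is correct and rests on precisely the two facts the paper uses: robots enter the Finished state only from the Leader state, and by Lemma~\ref{lem4_app} a Leader only ever moves to and occupies free vertices. The paper's own proof, however, stops right there --- it is three sentences that combine these observations and conclude directly that when a robot enters Finished no neighbour is occupied.

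Your time-ordering contradiction, the cascade-gap bookkeeping, the inherited-leadership case, and the separate treatment of the Door buffer vertex go well beyond what the paper writes down. That extra care is arguably justified (the paper's terse argument does elide exactly the temporal subtleties you flag, in particular what happens between the moment a Leader first lands on a vertex and the later moment a robot becomes Finished there while other chains are active), but the paper treats the short version as sufficient. So: same core approach, same key lemma, but you are filling in rigor the paper leaves implicit rather than taking a different route.
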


\begin{proof}\label{lem6proof_app}
    A robot goes into the Finished state only after becoming a Leader. According to Lemma~\ref{lem4_app}, a Leader only moves to and occupies free vertices. Hence, when a robot enters the Finished state, none of the vertices adjacent to it are occupied. \qed
\end{proof}

\begin{theorem}\label{thm:multind_app}
    Let $G$ be a connected graph $G$ with $k$ Doors. Algorithm MULTIND fills an MIS of vertices in $O(m^2)$ epochs of $G$ under an SSYNC scheduler, without collisions, by mobile luminous robots having the following capabilities: visibility range of 5 hops, persistent storage of $O(\log (\Delta + k))$ bits, and $\Delta + k + 7$ colors.
\end{theorem}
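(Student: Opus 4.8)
The plan is to assemble the theorem from the lemmas already established for MULTIND together with the time-analysis template used for IND. The statement bundles four claims — correctness (the filled set is an MIS), collision-freedom, the resource bounds, and the $O(m^2)$ epoch bound — so I would prove each in turn and then combine. Collision-freedom is immediate from Lemma~\ref{lem5_app} (and its corollary ruling out chain self-crossings and crossings between distinct chains). The resource bounds are read off from the algorithm description: the visibility range of $5$ hops is needed because a dominating Leader must verify that a target two hops away together with its $N_v^2$ neighbourhood is clear while also tracking a dominated Leader up to five hops away along $\mathcal{P}_5$; the memory is $O(\log(\Delta+k))$ to store \emph{State}, \emph{Target}, \emph{NextTarget}, \emph{Entry} (each a pair of port numbers, so $O(\log\Delta)$) plus the identity of the Door among $k$ possibilities; and the colour count is exactly $\Delta$ DIR colours, plus CONF, CONFC, CONF2, CONF3, MOV, ON, OFF (seven), plus the $k$ colours WAIT-$1,\dots,$WAIT-$k$, i.e. $\Delta+k+7$.

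For correctness I would argue exactly as in Theorem~\ref{thm1_app}. By Lemma~\ref{lem6_app} no two Finished robots are adjacent, so at termination the occupied vertices form an independent set. For maximality, suppose for contradiction some vertex $v$ is free at termination. Since $G$ is connected and every Finished robot blocks only its own neighbourhood, there is a Finished robot $r_1$ with $v$ two hops away. Just before entering the Finished state, $r_1$ was a Leader, and it could only have stopped in one of the cases handled by \textsc{MULTIND\_FindTarget}: either it found no admissible free vertex in $V_{r_1}^2$, or every candidate was ruled out by the $\mathcal{P}_5$-avoidance / chain-non-cutting conditions. In the first case, if $v$'s relevant neighbourhood $N_v^2\cap V_{r_1}^2$ were entirely unoccupied-or-Finished, $v$ would itself be an admissible target — contradiction. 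The genuinely new wrinkle over the single-Door proof is the possibility that $r_1$ was \emph{dominated} and $v$ lay on $\mathcal{P}_5(r_1, r'_L)$ for some dominating Leader $r'_L$, or that some obstructing robot near $v$ was not yet Finished. Here I would use the hierarchy: leadership is transferred down the WAIT-rank order, and since the domination relation is acyclic (ranks are totally ordered), repeated leadership transfers eventually make the robot occupying the blocking position, or the robot on $v'\in N_v^2$, a Leader that is no longer dominated with respect to $v$; such a Leader has a clear two-hop path to $v$ and therefore cannot enter the Finished state — contradiction. This termination-plus-progress argument is the main obstacle, because one must check that the finitely many Leaders and leadership transfers cannot deadlock in a cycle; acyclicity of the WAIT-order and the fact (Remark) that packed-state vertices stay occupied are what make it go through.

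For the time bound I would reuse Lemmas~\ref{t1lem_app}, \ref{t2lem_app}, \ref{t3lem_app} essentially verbatim, noting that each only counts epochs for intra-chain colour handshakes and a single cascading MOVE, none of which is affected by the presence of other Doors (an epoch under SSYNC is simply redefined as the minimal block of rounds activating every robot). A chain has length at most $m$, growing a chain of length $i$ costs at most $7i$ epochs for the re-packing plus $i$ for the cascade, and there are at most $m$ leadership transfers at $4$ epochs each, giving $\sum_{i=1}^{m}(7i+i)+4m=O(m^2)$ epochs; one extra subtlety is that with $k$ Doors there may be up to $k$ concurrent chains, but since they never cross and the global number of robots that ever enter is at most $m$, the total movement work is still bounded by $O(m^2)$ (a dominated Leader may idle but idling costs nothing, and each Leader still only performs $O(m)$ productive MOVEs before a transfer). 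Combining the four parts yields the theorem. I expect the maximality argument under domination to be where a referee would press hardest, so that is the step I would write out in full detail.
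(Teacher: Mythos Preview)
Your proposal is correct and, in two respects, more detailed than the paper's own proof; the overall structure (collision-freedom via Lemma~\ref{lem5_app}, independence via Lemma~\ref{lem6_app}, then resource accounting, then time) matches, but two steps diverge.

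First, for maximality the paper simply asserts ``By Lemma~\ref{lem5_app} and Lemma~\ref{lem6_app}, the filled vertices in $G$ form a MIS,'' without a separate maximality argument; you instead replay the contradiction argument of Theorem~\ref{thm1_app} and add the domination analysis (acyclicity of the WAIT-rank order forcing an eventual undominated Leader with a clear path to $v$). Your version is the more complete one, and your instinct that this is the step most in need of detail is right.

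Second, for the $O(m^2)$ bound the paper does \emph{not} re-run Lemmas~\ref{t1lem_app}--\ref{t3lem_app}. Instead it exhibits a worst-case instance: if the $k$ vertices attached to Doors form a clique, then only the highest-ranked Door stays active and the execution degenerates to the single-Door case, whence the bound follows from the IND analysis. Your approach is a direct global accounting: at most $m$ robots ever enter (summed over all chains), each chain-growth of length $i$ costs $O(i)$ epochs, and there are at most $m$ leadership transfers, giving $\sum_i O(i)+O(m)=O(m^2)$ regardless of how many chains are concurrently active. Your argument is arguably tighter as an \emph{upper} bound (the paper's clique example really shows the bound is attained rather than that it holds uniformly), while the paper's buys brevity by piggy-backing on Theorem~\ref{thm:indtime_app}. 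Either route is acceptable; if you submit yours, the one place to expand is precisely the one you flagged.
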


\begin{proof}\label{thm:multindproof_app}
    By Lemma~\ref{lem5_app} and Lemma~\ref{lem6_app}, the filled vertices in $G$ form a MIS and the filling is done without collisions. The robots require $O(\log (\Delta + k))$ bits of memory to store the following: \emph{State} (4 states: 2 bits), \emph{Target} (directions to the target vertex: $2  \ceil{\log \Delta}$ + $\ceil{\log k}$ bits), \emph{NextTarget} (directions to the vertex to which the robot has to move after the \emph{Target} vertex is reached: $2  \ceil{\log \Delta}$ bits).

    The colors used by the robots are $\Delta$ colors to show the directions to the target of the robot, that also acts as a special color to switch to the Finished state.
    Initially, when a robot is place for the first time at Door, it is colored with ON.
    There are $k$ numbers of WAIT colors - one for each Door.
    There are three additional colors (CONF, CONFC, and CONF2) for confirming that the robot saw the signaled direction and confirmations of the Predecessor or the Successor, one color MOV used during the movement, one color CONF3 to indicate that the chain is in Packed state and the OFF color.
    
    Consider a graph where all the $k$ vertices that are connected to Doors node form a clique.
    In this case, only the Leader corresponding to the highest color Door would occupy one of the nodes of the $k$-clique. In this particular case, only one of the Doors remain active, and robots at all other Doors would go into the Finished state. 
    So the multi-Door case would behave as a single Door case and thus replicate the time-bound for a single Door case. 
    Thus, from Theorem~\ref{thm1_app} it follows that, MULTIND solves the \emph{MIS Filling problem} in a graph with multiple Doors in $O$($m^{2}$) epochs.
    \qed
\end{proof}

\section{Discussion}
        \subsubsection{On the requirement of the colors:}
    % As an extension of \cite{hideg2020asynchronous}, we, in this paper, study the \emph{MIS Filling problem} on an arbitrary connected graph for both single and multiple door cases. In our \emph{MIS Filling problem}, any robot needs to communicate a target vertex which is at $2$ hops distance from its current position before movement, as none of the two robots can occupy adjacent vertices of the graph.  
    We can assert that at least $\Delta$ colors are required. Since the movement of robots via a port number at a node is marked by a corresponding DIR color, if we have less than $\Delta$ colors, there exist two ports that are marked by the same color, thus resulting in a configuration with adjacent vertices occupied by robots.
    In the absence of $k$ colors corresponding to the $k$ Doors, it is impossible to have a winner among all leader robots, resulting in a collision.  While forming MIS, to avoid collision, we also need to ensure that a chain of robots should not cross or overlap itself. 
    
    While we need $\Delta$ colors to communicate the port numbers,
    we can minimize the number of colors at the cost of increasing round complexity. We can use Hoffman encoding to reduce the number of colors used for the port numbers from $\Delta$ to a constant number of colors. Now, a sequence of colors would represent a port number instead of a particular color and increases time complexity by a factor of the size of the largest encoding.
    
   \vspace{-0.6cm}
    \subsubsection{One hop vs Two hops movement:}Our model considers that a robot moves two hops in one LCM cycle. While moving, the color of the robot is set to MOV. However, we can easily avoid this $2$ hops movement of a robot by replacing the color MOV with two colors, MOV1 and MOV2. After a robot chooses its target, it sets its color to MOV1 for the first hop and then changes its color to MOV2 before reaching the target. 

    \begin{figure}[H]
        \centering
        \includegraphics[width=0.9\linewidth]{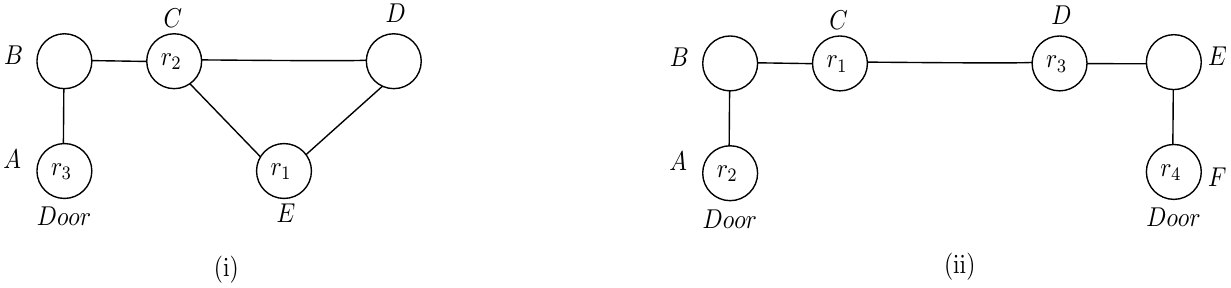}
        \caption{ Robots occupy adjacent nodes for (i) a single Door and visibility range two; (ii) multiple Doors and visibility range four.}
        \label{fig:vis31_app}\vspace{-1em}
    \end{figure}
    %Now, we discuss the minimality of visibility range and the number of colors used by the robots.
    \subsubsection{Minimality of visibility range:} For the single Door case, a robot having a visibility range of two fails to avoid placing robots in adjacent nodes. Consider a graph as shown in Fig.~\ref{fig:vis31_app}(i). Initially, a robot appears at the Door vertex $A$; then it moves to $C$ a vertex two hops away. If the robots only have a visibility range of two, then $r_1$ can go to $E$ without realizing that $E$ and $C$ are connected, resulting in a configuration that is not an independent set. With a visibility range of two, a robot cannot determine whether a robot is present at the neighbor of the target vertex. Hence we need a visibility range of three for a single Door case.

    Similar to the single Door case, consider the graph in Fig.~\ref{fig:vis31_app}(ii) for the multi-Door case. If the robots have a visibility range of four, then the robots at $A$ and $F$ may simultaneously move to occupy $C$ and $D$ and result in a configuration with robots occupying adjacent vertices. 
    
\section{Conclusion}\label{sec:conclusion_app}

In this paper, we presented and analyzed two algorithms for solving two flavors of the problem of filling a  maximal independent set of vertices in an arbitrary connected graph using luminous mobile robots.
The first algorithm IND for graphs with a single Door works under an asynchronous scheduler. It uses robots having three hops of visibility range, $\Delta + 8$ number of colors, and $O(\log \Delta)$ bits of persistent storage and solves the problem in $O(n^2)$ epochs.
The second algorithm, MULTIND, works in graphs with $k~( > 1)$ Doors. It forms an MIS under a semi-synchronous scheduler using robots with five hops of visibility range, $\Delta + k + 7$ number of colors and having $O(\log (\Delta + k))$ bits of persistent storage, completing in $O(n^2)$ epochs. It is open to extending this algorithm to the generalized asynchronous scheduler. 
The model of attaching robot splitting Doors to a graph is a new direction in multi-robot coordination problems, and many other graph problems can be explored under the same model.

\bibliographystyle{splncs04}

\bibliography{mybibliography}

\end{document}